\documentclass[submission,copyright,creativecommons]{eptcs}
\providecommand{\event}{NCMA 2026}

\usepackage{underscore}
\usepackage[T1]{fontenc}

\usepackage{placeins}

\usepackage{listings}

\usepackage{amsmath,amssymb,amsthm}
\usepackage{algorithm}
\usepackage[noend]{algpseudocode}
\usepackage{xcolor}
\usepackage{xspace}
\usepackage{verbatim}
\usepackage{booktabs}
\usepackage{graphicx}
\usepackage{subcaption}
\usepackage{thmtools}
\usepackage{thm-restate}

\usepackage{tikz}
\usetikzlibrary {automata,positioning} 
\usetikzlibrary{calc}
\usepackage{tikzpagenodes}

\newcommand{\eps}{\ensuremath{\varepsilon}}
\newcommand{\thom}{\ensuremath{\mathit{Th}}}
\newcommand{\glush}{\ensuremath{\mathit{Gl}}}

\newcommand{\IN}[1][O]{\ensuremath{In(#1)}\xspace}

\newcommand{\entry}{\ensuremath{\mathit{entry}}}

\DeclareRobustCommand{\MFN}{\ifmmode\text{\texttt{MFN}}\else\texttt{MFN}\fi}
\DeclareRobustCommand{\CN}{\ifmmode\text{\texttt{CN}}\else\texttt{CN}\fi}
\DeclareRobustCommand{\IN}{\ifmmode\text{\texttt{IN}}\else\texttt{IN}\fi}
\DeclareRobustCommand{\I}{\ifmmode\text{\texttt{IAR}}\else\texttt{IAR}\fi}
\DeclareRobustCommand{\IR+}{\ifmmode\text{\texttt{IAR+}}\else\texttt{IAR+}\fi}
\DeclareRobustCommand{\None}{\ifmmode\text{\texttt{None}}\else\texttt{None}\fi}
\DeclareRobustCommand{\All}{\ifmmode\text{\texttt{All}}\else\texttt{All}\fi}
\newcommand{\Or}{\mathbin{|}}

\newcommand{\changeB}[1]{\textcolor{black}{#1}}

\tikzset{>=latex}

\newtheorem{theorem}{Theorem}

\newtheorem*{corollary}{Corollary}
\theoremstyle{definition}
\newtheorem{definition}{Definition}
\theoremstyle{remark}
\newtheorem{remark}{Remark}
\newtheorem{conjecture}{Conjecture}
\newtheorem{example}{Example}
\newtheorem{observation}{Observation}

\title{Selective Memoization for Efficient Backtracking Regular Expression Matching}
\author{Martin Berglund
\institute{Umeå University,\\Umeå, Sweden}
\email{mbe@cs.umu.se}
\and
Brink van der Merwe
\institute{
Stellenbosch University,\\Stellenbosch, South Africa}
\email{abvdm@cs.sun.ac.za}
\and
Iain le Roux\footnote{Authors are listed alphabetically. While this work represents a collaborative effort, Iain le Roux served as the lead author.}
\institute{
Stellenbosch University,\\Stellenbosch, South Africa}
\email{iainleroux@gmail.com}
}

\def\authorrunning{M.\ Berglund, B.\ van der Merwe, I.\ le Roux}
\def\titlerunning{Selective Memoization for Regex Matching}

\begin{document}
\maketitle

\begin{abstract}
Backtracking regular expression matchers are widely used due to their expressive power but may exhibit exponential worst-case matching time. Memoization provides a principled method for eliminating redundant computation and ensuring linear matching time, but full memoization is memory-intensive and impractical.
We introduce the Minimum Feedback Node (\MFN{}) memoization scheme, a selective memoization strategy based on computing a minimum feedback vertex set of an automaton. 
We establish relationships with existing memoization schemes and analyze their behaviour under both Thompson and Glushkov automaton constructions. 
\end{abstract}

\section{Introduction}
Regular expressions (regexes) are among the most widely used tools in
software engineering, appearing in text processing, input validation,
security filtering, and compiler construction~\cite{davis2019aren}.
Most practical regex engines -- including those in Java, Python, Ruby, and JavaScript -- use a backtracking algorithm that performs a depth-first
search through a nondeterministic finite automaton (NFA) constructed from
the regex. While this approach supports expressive features such as
capture groups and backreferences, it is vulnerable to \emph{Regular
Expression Denial of Service} (ReDoS) attacks~\cite{davis2021using}: for many regexes it is possible to construct input strings that force the matcher into exponential
worst-case matching time (in the length of the input strings).

Catastrophic backtracking is closely related to \emph{ambiguity} in the
underlying NFA --- that is, the existence of multiple distinct
computations for the same input string. When the NFA exhibits
\emph{infinite degree of ambiguity} (IDA), the backtracking matcher
explores, for some input strings, a super-linear number of transitions  in the length of the input string~\cite{weideman2016analyzing}, leading to super-linear matching 
time. 
\emph{Full memoization} prevents this by recording and reusing the result of the computation starting from each configuration (that is, each state and input position) reached, reusing this information when encountering the configuration again. Davis et al.~\cite{davis2021using}
pointed out that full memoization guarantees linear matching time, but at a potentially prohibitive memory cost.
To reduce memory overhead, \emph{selective} memoization schemes memoize
only a subset of NFA states, aiming to eliminate IDA while minimising
memory usage. Davis et al.~~\cite{davis2021using} and Van der Merwe et al.~\cite{van2021memoized} formalized
selective memoization and defined \emph{memoized NFAs} (mNFAs). The latter proposed three schemes for selecting states to memoize, each an overapproximation, as the work also proved NP-completeness of
finding a \emph{minimum} memoization set for general NFAs.

In this paper, we introduce the further refined \emph{Minimum Feedback Node} (\MFN{})
memoization scheme. \MFN{} memoizes a minimum feedback vertex set of the NFA graph (in the absence of $\varepsilon$-cycles): a smallest set of nodes whose removal renders
the NFA graph acyclic. 
In the presence of $\varepsilon$-cycles, fewer nodes are memoized than what is required for a minimum feedback vertex set of the NFA, but we deal with $\varepsilon$-cycles separately, see the definition of runs in Section~\ref{sec:background}, and Observation~\ref{obs:no-repeat-transition}.
Since every non-$\varepsilon$ cycle must contain at least one memoized
state, \MFN{} eliminates IDA. 
We present linear-time
\emph{parse tree algorithms} (PTAs) that compute a minimum feedback vertex set for both the
Thompson and Glushkov NFA constructions directly from the parse tree of
the regex, without first building the full automaton. For Thompson NFAs,
reducibility~\cite{giammarresi2004thompson} allows us to apply Shamir's
classical linear-time minimum cutset
algorithm~\cite{shamir1979linear}; for Glushkov NFAs -- which are not always reducible~\cite{GiammarresiPontyWood1998} -- we provide a new parse-tree-based algorithm.
Operating on parse trees has several advantages, as practical
regex compilers typically build a parse tree before constructing the automaton, and often leave the automaton implicit. As such, even though Shamir's algorithm would be applicable to the automaton, the algorithm for the Thompson case operates directly on the parse tree, allowing \MFN{} memoization to be computed in a single pass during compilation, without materializing the NFA. This reduces implementation complexity and avoids potential overheads. %
We further investigate the relationships between the memoization schemes of~\cite{van2021memoized} and $\MFN$, establishing that $\MFN$ always forms a subset of the key \emph{closure node} memoization, and investigate the non-minimality of a memoization scheme (\I{}) proposed in~\cite{van2021memoized}, even for automata derived using the Glushkov construction.
We also explore the differences in memoization behaviour
between Thompson and Glushkov automata, and offer extensive experimental results.

\section{Definitions and Background}
\label{sec:background}

For an alphabet (a finite set of symbols) $\Sigma$, let $\Sigma^*$ denote the set of all strings (sequences of symbols) from $\Sigma$. Let $\varepsilon$ denote the empty string. For $u,v\in \Sigma^*$, let $u\cdot v$ denote the concatenation of $u$ and $v$. When it can cause no confusion, we write simply $uv$. For any string $w$ and set $S$, let $\pi_S(w)$ denote the string resulting from removing all symbols that are not in $S$ from $w$. \emph{Regular expressions over $\Sigma$} are defined inductively:
$E \;::=\; \varepsilon \;\bigm|\; a \;\bigm|\;
  (E \Or E) \;\bigm|\; (E \cdot E) \;\bigm|\; E^{*} \;\bigm|\; E^{+} \;\bigm|\; E^?$, 
where $a \in \Sigma$, concatenation $E \cdot E'$ is often written $EE'$. 
We treat $E^+$ as a primitive operator with its own NFA construction (analogous to, but distinct from, $E^*$), rather than rewriting it as $EE^*$. 
This avoids introducing duplicate subgraph structures. 
Keeping with the goal of avoiding unnecessary states, we also treat $E^?$ as a primitive operator rather than as syntactic sugar for $E \Or \varepsilon$. We construct the NFA for $E^?$ directly from that of $E$: either by adding $\varepsilon$-transitions from the initial state to all final states or by making the initial state accepting. This avoids introducing superfluous structure and, in constructions such as Glushkov NFAs, can avoid $\varepsilon$-transitions altogether.
The language matched by a regex $E$ is defined as usual and is denoted $\mathcal{L}(E)$.

We closely follow the definitions of~\cite{davis2021using} and~\cite{van2021memoized} for the relevant automata and ambiguity concepts. A memoized non-deterministic finite automaton (mNFA) is a tuple $A=(\mathcal{M},Q,\Sigma,\delta,q_0,F)$ where: (i) $Q$ is a finite set of states; (ii) $\mathcal{M}\subseteq Q$ is the \emph{memoized} states; (iii) $\Sigma$ is the input alphabet; (iv) $q_0 \in Q$ is the initial state; (v) $\delta \subseteq Q\times (\Sigma \cup \{\varepsilon\}) \times Q$ is the transition relation; and (vi) $F\subseteq Q$ are the final states.

The set of \emph{runs of $A$ on $w\in \Sigma^*$} is the set of root to leaf paths in an unordered tree denoted $T_A(w)$ with nodes having labels from $Q\times \{0,\ldots,|w|\}$. The root is $(q_0,0)$, and a node $(q,i)$ has $(q',j)$ as a child if: (i) $(q,\alpha,q')\in \delta$; (ii) either $j=i$ and $\alpha=\eps$, or $j=i+1$ and the $j$th symbol in $w$ is $\alpha$; and; (iii) if a node labeled $(q,i)$ has a child labeled $(q',i)$, it then has no ancestor labeled $(q',i)$ whose parent is labeled $(q,i)$ (which avoids infinite trees in the presence of $\varepsilon$-loops). Ultimately, this ensures that a given $\varepsilon$-transition cannot be repeated before first consuming another input symbol, refer to Observation~\ref{obs:no-repeat-transition}.
The set of \emph{accepting runs} is the set of root to leaf paths in the tree, obtained from $T_A(w)$ and denoted as $\bar{T}_A(w)$, by deleting all nodes 
so that neither the node nor any of its descendants have a label from $F\times \{|w|\}$. We say that $A$ accepts $w$ if the accepting runs tree is non-empty. We denote by $\mathcal{L}(A)$ the set of all strings accepted by $A$.

For \emph{memoized runs of $A$ on $w\in \Sigma^*$}, we consider (possibly non-unique) trees $M_A(w)$ obtained from $T_A(w)$ by deleting the minimum number of nodes such that
there is at most one non-leaf labeled $(q,i)$, with $q\in \mathcal{M}$, where $\mathcal{M}$ are the memoized states of $A$. 
$\bar{M}_A(w)$ is obtained analogously from $\bar{T}_A(w)$.

We assume that all states are \emph{useful} in the mNFA considered here; that is, all states can be reached from the initial state, and a final state can be reached from all states.

\begin{observation}
    \label{obs:no-repeat-transition}%
    Condition $(iii)$ in the definition $T_A(w)$ may seem arbitrary and complex, preventing repeating epsilon transitions in a loop, without making use of memoization (which is of course also a possible way of avoiding these infinite loops). This, however, harmonizes with~\cite{van2021memoized} and with real-world matcher implementations (e.g.\ in Java), where $\mathcal{O}(|\delta|)$ space is used to prevent infinite loops, separate from the memoization discussed in Example~\ref{ex:java-memo} (later in this section). The simpler statement ``a node labeled $(q,i)$ cannot have an ancestor labeled $(q,i)$'' would be equivalent for all results here, but makes a difference for capturing semantics (i.e.\ parsing) in practice.
\end{observation}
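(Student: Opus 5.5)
The observation contains two claims. The first is that condition (iii) means no $\varepsilon$-transition is repeated before the next symbol is consumed, which makes $T_A(w)$ finite. The second is that replacing (iii) by the simpler condition ``no node labeled $(q,i)$ has an ancestor labeled $(q,i)$'' changes nothing that the results of this paper depend on. Since ``all results here'' is not a formal statement, I would read it as two concrete invariants: the accepted language $\mathcal{L}(A)$, and whether the (memoized) run trees have size linear in $|w|$ for a fixed $A$. I would prove both claims in that form.

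\textbf{Finiteness.} Positions along any root-to-leaf path are nondecreasing and grow by at most one per edge. So the nodes with a fixed position $i$ form one contiguous segment of the path, and every edge inside that segment is an $\varepsilon$-transition. Condition (iii) looks only at ancestors labeled with the same position $i$, so it applies within a single segment. Inside a segment, an edge from $(q,i)$ to $(q',i)$ can then occur at most once. Hence the first claim holds: no $\varepsilon$-transition repeats before consuming a symbol. It follows that each segment has at most $|\delta|$ edges. A path therefore has at most $(|w|+1)(|\delta|+1)$ nodes, and since branching is bounded by $|\delta|$, the tree is finite. The same argument gives segments of length at most $|Q|$ under the simpler condition.

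\textbf{Same accepted language.} Call the tree built with the simpler condition $T'_A(w)$. A path that repeats no state within a segment also repeats no $\varepsilon$-transition there, so every run of $T'_A(w)$ is a run of $T_A(w)$. In the other direction, take an accepting run of $T_A(w)$. If some state repeats within one $\varepsilon$-segment, I delete the $\varepsilon$-cycle between the two occurrences. The result is still an accepting run and satisfies the simpler condition, so both definitions accept the same strings.

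\textbf{Same complexity behaviour.} I expect this to be the main obstacle, because the comparison has to survive the memoization pruning defining $M_A(w)$. The route I would try is a segment-by-segment comparison:
\begin{itemize}
\item Every node of $T_A(w)$ projects onto a node of $T'_A(w)$ with the same label, obtained by contracting the $\varepsilon$-cycles in its path.
\item The number of distinct $\varepsilon$-segments under (iii) that project to a single segment under the simpler condition is bounded by a constant $c_A$ depending only on $A$. This holds because each segment has at most $|\delta|$ edges.
\end{itemize}
This gives $|T_A(w)| \le c_A\,|T'_A(w)|$, and the same inequality for the memoized trees. Memoization identifies nodes by their label $(q,i)$, and labels are preserved under the projection. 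So the pruning of repeated non-leaf occurrences of memoized configurations acts compatibly on the two trees, and the inequality carries over to $M_A(w)$.

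Consequently, a memoization set yields run trees of size linear in $|w|$ under one definition exactly when it does under the other. That is the form in which ambiguity (IDA) and linear-time matching are used throughout the paper.
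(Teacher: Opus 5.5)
The paper gives no proof of this observation, so the only question is whether your argument establishes your reading of it. Your finiteness argument and your language-equivalence argument by loop erasure within an $\varepsilon$-segment are fine. Strictly, a root-to-leaf path of $T'_A(w)$ is only a prefix of one in $T_A(w)$, but for acceptance you only need that every node of $T'_A(w)$ occurs with the same label in $T_A(w)$.

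The third part breaks at the step from the per-segment constant $c_A$ to $|T_A(w)|\le c_A|T'_A(w)|$. A node of $T_A(w)$ is determined by its projection \emph{together with} an independent choice of lifted $\varepsilon$-segment at every position on its path. So the fibre over a node at position $i$ can have size $c_A^{\,i+1}$, not $c_A$.

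This cannot be repaired, because the conclusion itself is false. Take $Q=\{p,q\}$, $q_0=p$, $F=\{p\}$, $\delta=\{(p,a,p),(p,\varepsilon,q),(q,\varepsilon,p)\}$ and $\mathcal{M}=\emptyset$, so that $M_A(w)=T_A(w)$.
\begin{itemize}
\item Under (iii), the root, and every node $(p,i)$ entered by reading $a$, reaches $(p,i+1)$ both directly and via $(q,i),(p,i)$. Hence $T_A(a^n)$ has at least $2^n$ nodes labeled $(p,n)$, and the ambiguity is exponential.
\item Under the simpler condition, $(q,i)$ may not return to $(p,i)$. Hence $T'_A(a^n)$ has $2n+2$ nodes and ambiguity $1$, even though the criterion of Theorem~\ref{def:ida} (states $p,q$, $v=a$) reports IDA.
\end{itemize}
So with $\varepsilon$-cycles the two definitions disagree on linearity for arbitrary memoization sets. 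Your assertion that the memoization pruning ``acts compatibly'' with the projection is unproved, and it cannot rescue this.

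What can be proved, and what the paper's own results actually use, is narrower.
\begin{itemize}
\item If $A$ has no $\varepsilon$-cycles, an $\varepsilon$-segment can never revisit a state. Neither condition then ever removes a child, and $T_A(w)=T'_A(w)$ literally. This covers every Glushkov NFA and the $\varepsilon$-loop-free setting of the cited results.
\item If $\mathcal{M}$ meets every cycle that reads a symbol, as \CN{} and \MFN{} do, both memoized trees are linear under either condition. A stretch of a path avoiding $\mathcal{M}$ cannot contain one state at two different positions, since the walk between the two occurrences would contain a symbol-reading simple cycle avoiding $\mathcal{M}$. Such stretches therefore have length at most $|Q|(|\delta|+1)$. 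With branching at most $|\delta|$, each of the at most $|\mathcal{M}|(|w|+1)+1$ expanded memoized nodes (or the root) has only a bounded number of nodes below it before the next memoized node.
\end{itemize}
You should establish the ``equivalent for all results here'' claim in this restricted form, not through a global constant-factor comparison of the two trees.
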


\begin{observation} 
    \label{obs:worst-case}%
    $M_A(w)$ describes worst-case recursion trees, where we only reach a configuration $(q,i)$ for string position $i$ and a memoized state $q$ at most once. In practice, matching is done with a backtracking search where the transitions are \emph{prioritized}~\cite{inthewild}, that is, they are explored in a specific order and thus the various trees defined above, become ordered. 
    However, when considering all strings the worst matching time case can be realized in all but pathological cases. For example, matching $(\Sigma^*) \Or E$ will never use $E$ in practical matchers, so any worst-case derived from $E$ is not possible to exercise in such cases. As such we define matching time as if the tree is fully explored (see Definition~\ref{defn:matching-time}). Observe that $(a\Or a)^*$, for example, will match in linear time on $aa\cdots a$ (as every path succeeds), but exhibits its exponential matching time on $aa \cdots ab$. While pathological cases (such as $(.\Or .)^*$, with `$.$' matching any character) exist, they tend to correspond to expressions that are not meaningful, so we simplify our analysis by assuming full exploration of $M_A(w)$ (which leads to an over-approximation of matching time).
\end{observation}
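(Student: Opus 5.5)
The plan is to make the informal content of the observation precise in three parts. First, a backtracking matcher with memoization explores a subtree of some $M_A(w)$, so $|M_A(w)|$ is an upper bound on its work. Second, this bound is attained whenever the search is forced to exhaust the tree. Third, the example $(a\Or a)^*$ shows both behaviours.

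For the first part, I would model the matcher as a prioritized depth-first traversal of $T_A(w)$. Condition (iii) is realized by the separate $\mathcal{O}(|\delta|)$ loop-prevention bookkeeping mentioned in Observation~\ref{obs:no-repeat-transition}. Memoization then acts as a set of visited configurations $(q,i)$ with $q\in\mathcal{M}$: when the DFS reaches such a configuration a second time, it returns the stored (failed) result instead of expanding it. I would prove by induction on the DFS order that the expanded nodes form a subtree of $T_A(w)$ in which each label $(q,i)$ with $q\in\mathcal{M}$ is a non-leaf at most once. Such a subtree is obtained from $T_A(w)$ by deleting nodes, so it is contained in a tree of the form $M_A(w)$. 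The minimality clause in the definition of $M_A(w)$ matters here: it guarantees that the only nodes deleted are subtrees below repeated memoized configurations. Hence the explored tree is exactly some $M_A(w)$ if the traversal is never cut short, and a prefix of one otherwise. The non-uniqueness of $M_A(w)$ corresponds exactly to the freedom in the priority order, namely which occurrence of $(q,i)$ is expanded first.

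For the second part, early termination happens only when an accepting leaf labelled in $F\times\{|w|\}$ is reached. If $w\notin\mathcal{L}(A)$, no such leaf exists, so the DFS explores the whole tree and the work equals $|M_A(w)|$ for the induced ordering. To realize the worst case for a given regex, the natural strategy is to append a symbol that kills all runs, as in $a\cdots ab$. This works except when some branch accepts every continuation, which is the pathological $(\Sigma^*)\Or E$ situation. I would only exhibit this situation rather than characterize it, since the observation claims just that worst cases are realizable "in all but pathological cases".

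For the third part, I would check the example directly. Take $\mathcal{M}=\emptyset$ on $a^nb$. Every $a$ can be consumed through either branch, giving at least $2^n$ nodes at depth $n$, none of which survive the final $b$, so the full tree is explored. On $a^n$, the first-priority path accepts, and only $\mathcal{O}(n)$ nodes are visited.

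The main obstacle is the first part: pinning down that prioritized DFS with memoization yields exactly a tree of the $M_A(w)$ form, and not something smaller or differently shaped. The subtle points are the interaction between condition (iii) and memoization along $\varepsilon$-paths, and the fact that in this setting memo hits only ever record failures. I would handle this by fixing the DFS order and defining the deleted nodes explicitly as the descendants of the second and later occurrences of each memoized configuration.
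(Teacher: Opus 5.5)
Your upper bound in Part~1 and your check of the example in Part~3 are sound, and they cover everything the paper commits to. The paper gives no argument beyond the remark itself and explicitly calls full exploration of $M_A(w)$ an over-approximation. For the bound: the memoized DFS tree arises from $T_A(w)$ by deletions and has at most one non-leaf per memoized configuration, so minimality of the deletions defining $M_A(w)$ gives it at most $|M_A(w)|$ nodes.

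The step that fails is your strengthening to exactness: that the fully explored DFS tree ``is exactly some $M_A(w)$'', that it is ``contained in'' one, and that on rejected inputs ``the work equals $|M_A(w)|$''. Minimality only says that $M_A(w)$ deletes nothing but subtrees below repeated memoized configurations. It says nothing about whether expanding the \emph{first-encountered} occurrence minimizes the number of deletions. With $\varepsilon$-cycles it need not, because condition (iii) makes the subtree below a node depend on its ancestors and not only on its label.

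For a concrete case, take $m\in\mathcal{M}$ and $\varepsilon$-transitions $s\to p$, $s\to m$, $p\to p'$, $p'\to m$, $m\to p$. Add an $a$-transition from $p'$ into a memo-free region whose run tree on the rest of a rejected $w$ has $N$ nodes. Then $(m,0)$ is a non-leaf at $u_1$ (path $s,p,p',m$) and at $u_2$ (path $s,m$). Below $u_1$, condition (iii) forbids reusing $p\to p'$, so $u_1$ has a single leaf child. Below $u_2$ that transition is available and leads to a second copy of the $N$-node region. The unique $M_A(w)$ therefore turns $u_1$ into a leaf and has $8+2N$ nodes. A DFS preferring $s\to p$ instead expands $u_1$ and gets a memo hit at $u_2$. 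It visits only $6+N$ nodes although $w$ is rejected, and it keeps $u_1$'s child, which that $M_A(w)$ deletes, so it is not even contained in it.

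What Part~2 needs is a lemma you do not state. Suppose the subtree of $T_A(w)$ below a node depends only on its label. For a configuration $c$, let $b(c)$ be the number of proper descendants of a $c$-labelled node when unfolding is stopped (leaves kept) at memoized configurations. Then every deletion tree in which each reachable memoized configuration is expanded exactly once has size $1+b((q_0,0))+\sum_c b(c)$, with the sum over the reachable memoized configurations $c\neq(q_0,0)$. This size is independent of which occurrence is expanded. These trees are precisely the $M_A(w)$, and the fully explored DFS tree is one of them.

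The hypothesis holds whenever $A$ has no $\varepsilon$-cycle, since a violation of condition (iii) at a node labelled $(q,i)$ requires an $\varepsilon$-path from $(q,i)$ back to $(q,i)$. So your equality holds, for instance, for all Glushkov automata. In general only the inequality survives, which matches the paper's hedge.

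Separately, the claim that the non-uniqueness of $M_A(w)$ ``corresponds exactly'' to the priority order is false even without $\varepsilon$-transitions. Priorities are fixed per state, so if two memoized configurations are both reachable below each of two sibling branches, every priority expands both under the same sibling, whereas some $M_A(w)$ expands one under each. This is harmless, since the sizes agree.

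A minor point: under the Thompson construction the $2^n$ nodes in Part~3 sit at input position $n$, not at depth $n$.
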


The \emph{Thompson construction}~\cite{thompson1968} converts a regex $E$
into an NFA $\thom(E)$ inductively over the structure of $E$, using
$\varepsilon$-transitions to connect sub-automata. The resulting NFA has
$O(|E|)$ states and transitions, and every state corresponds to a symbol
or operator in $E$. A key structural property is that $\thom(E)$ is
\emph{reducible}~\cite{giammarresi2004thompson}: for any depth-first
search (DFS) of $\thom(E)$, the set of back edges is the same, which means that
each back edge corresponds to exactly one Kleene closure (or $+$)
operator in $E$. This property enables the application of linear-time
algorithms for minimum feedback vertex sets on Thompson
graphs~\cite{shamir1979linear}. Examples are given in Figure~\ref{fig:thompson-example}.
For any regex $A$, let $\entry(A)$ denote the node in $\thom(A)$ that receives an incoming $\eps$ edge in $\thom(A^*)$ and $\thom(A^+)$ (we assume this node exists in $\thom(A)$ and is the same for both closures, but this can be ensured in all Thompson constructions).
\begin{figure}[htb]
  \begin{center}
    \vskip1em
  \begin{tikzpicture}
    \begin{scope}[minimum width=4mm]
      \node[draw,circle] (q0) {};
      \node[draw,circle,right=of q0,xshift=1cm] (qF) {};
      \node[draw,circle,right=of qF] (q01) {};
      \node[draw,circle,double,right=of q01,xshift=1cm] (qF1) {};
    \end{scope}
  \node at ($(q0)!0.5!(qF)$) {$A$};
  \node at ($(q01)!0.5!(qF1)$) {$A'$};
  \draw ($(q0)-(5mm,0)$) edge[->] (q0);
  \draw (q0) edge[out=40,in=140,thick,gray,densely dashed] (qF);
  \draw (q0) edge[out=-40,in=220,thick,gray,densely dashed] (qF);
  \draw (q01) edge[out=40,in=140,thick,gray,densely dashed] (qF1);
  \draw (q01) edge[out=-40,in=220,thick,gray,densely dashed] (qF1);
  \draw (qF) edge[->,above] node {$\eps$} (q01);
  \begin{scope}[minimum width=4mm]
    \node[draw,circle,right=of qF1,xshift=.5cm] (q0) {};
    \node[draw,circle,right=of q0,xshift=1cm] (q1) {};
    \node[draw,circle,right=of q1,xshift=1cm] (q2) {};
    \node[draw,circle,double,right=of q2,xshift=1cm] (qF) {};
  \end{scope}
  \node[below=of q0,xshift=2em,yshift=1em] (elabl) {\scriptsize $\entry(A)$};
  \draw[densely dotted] (elabl) -- ($(elabl)!0.85!(q1)$);
  \node at ($(q0)!0.5!(qF)$) {$A$};
  \draw ($(q0)-(5mm,0)$) edge[->] (q0);
  \draw (q1) edge[out=40,in=140,thick,gray,densely dashed] (q2);
  \draw (q1) edge[out=-40,in=220,thick,gray,densely dashed] (q2);
  \draw[] (qF) edge[out=140,in=40,looseness=.5,<-,above] node {$\eps$} (q0);
  \draw[] (q0) edge[->,above] node {$\eps$} (q1);
  \draw[] (q2) edge[->,above] node {$\eps$} (qF);
  \draw[] (q2) edge[in=270,out=270,->,below] node {$\eps$} (q1);
  \end{tikzpicture}
  \end{center}
  \caption{An illustration of a few aspects of the Thompson construction: on the left the construction of $\thom(A \cdot A')$, given the subautomata $\thom(A)$ and $\thom(A')$. On the right the construction of $\thom(A^*)$ from the subautomaton $\thom(A)$. All operations except $a\in \Sigma$ and $\eps$ involve gluing subautomata together with epsilon transitions, and possibly new states (with $\thom(a)$ the unique minimal DFA for $\{a\}$). The exact set of states and epsilon transitions superficially differ between treatments, but for our purposes here it only matters that every subexpression corresponds to a subautomaton with a single distinguished entry state (the initial state for that automaton) and exit state (the unique accepting state for that automaton), and that the back edge added by a $*$- or $+$-closure goes to that entry state (i.e.\ in $\thom(A)$ above, an epsilon edge from the exit to the entry of $\thom(A)$ is added).
  }
  \label{fig:thompson-example}
\end{figure}

While the Thompson construction can vary somewhat depending on the source, the \emph{Glushkov construction}~\cite{glushkov1961abstract} varies little. For a full treatment, see, for example,~\cite{glushkov-tcs}. Here we recall the key properties for our purposes. For any regex $R$, the construction produces the epsilon-free NFA $\glush(R)$ obtained by first making each alphabet symbol in $R$ unique through \emph{linearization}. We simply sequentially number the symbols. For example, we turn $(abb)^* \Or (bc)$ into $(a_1b_2b_3)^* \Or (b_4c_5)$. We create one distinct initial state $q_0$ and one state for each alphabet symbol. In our example, $Q=\{q_0,a_1,b_2,b_3,b_4,c_5\}$. Each non-initial state then represents the language of suffixes that can be matched if we `start' matching just to the right of that alphabet symbol occurrence in the expression. In the example, state $a_1$ accepts the strings in $b_2b_3(a_1b_2b_3)^*$, while state $b_4$ accepts only $\{c_5\}$. Both $\delta$ and $F$ are uniquely determined by this. $\delta$ describes simple reachability in the expression, and $F$ indicates whether the end of the expression is reachable (without reading any further input symbols). Dropping the indices from the input alphabet produces $\glush(R)$. In this example, we would have $\delta(q_0,a)=\{a_1\}$, $\delta(q_0,b)=\{b_4\}$, $\delta(a_1,b)=\{b_2\}$, $\delta(b_2,b)=\{b_3\}$, $\delta(b_3,a)=\{a_1\}$, $\delta(b_4,c)=\{c_5\}$, all other entries $\emptyset$, and $F=\{q_0,b_3,c_5\}$.

The key part we require here is that the unique initial state never receives an incoming transition (as all transitions read some symbol, and that symbol must be one of the occurrences in the expression). However, obtaining $\glush(A \cdot A')$ from $\glush(A)$ and $\glush(A')$ involves an arbitrary number of transitions (as any final state in $\glush(A)$ would have transitions to all symbols reachable from the initial state in $\glush(A')$). Analogously, $\glush(A^*)$ will have each final state in $\glush(A)$ gain the transitions of the initial state in $\glush(A)$. However, for example, $\glush(A \Or A')$ is isomorphic to what is obtained by merging the initial states of $\glush(A)$ and $\glush(A')$.

Given the frequent use of abbreviations in this paper, the key acronyms are categorized and briefly described next. IDA (and EDA) describe the ambiguity of automata (Definitions~\ref{def:ida} and~\ref{def:ambi}), which causes redundant matching attempts and, as a result, an excessive number of matching steps. All of \None, \All, \IN, \CN, \I{}, \MFN, and \IR+ are memoization schemes from Definitions~\ref{def:memoschemes}, \ref{def:mfn}, and~\ref{def:IAR+}. Each determines a subset of the states of an automaton, the memoization of which removes IDA. Finally, MFVS, the minimum feedback vertex set of a graph, is defined at the end of this section and underpins the memoization scheme \MFN{}.

Let us first define the matching time of an NFA. As this will only be used for worst-case analysis, we keep it simple and assume the entire tree of memoized runs is visited. This may not be the case for every string, but for expressions where all subexpressions are useful, this is quite realistic; recall Observation~\ref{obs:worst-case}.
\begin{definition}[Matching time]\label{defn:matching-time}
The \emph{matching time} of the NFA $A$ on input $w$ is the worst-case number of nodes in $M_A(w)$ (recall, $M_A(w)$ is not uniquely defined).
$A$ has linear matching time if there exists a constant $c$ such that, for every input $w$, the matching time is at most $c\cdot|w|$.
\end{definition}

For a tree $T$, let $|T|_{\textit{leaf}}$ denote the number of leaf nodes in $T$.

\begin{definition}[Ambiguity and IDA~\cite{van2021memoized}]
  \label{def:ambi}%
  The \emph{ambiguity} of an mNFA $A$, denoted $a_A(n)$, is defined as $\max \{ |\bar{M}_{A}(w)|_{\textit{leaf}} \mid w\in \Sigma^*, |w|\le n\}$.
  If $a_{A}(n)$ is unbounded, $A$ has
  \emph{infinite degree of ambiguity} (IDA). 
\end{definition}

\begin{theorem}[IDA characterization~\cite{van2021memoized}]
  \label{def:ida}
  An mNFA $A$ has IDA if and only if there exist two distinct states $p$
  and $q$, and a string $v\in\Sigma^+$, such that there are paths on $v$
  from $p$ to $p$, from $p$ to $q$, and from $q$ to $q$, with the cycle
  on $q$ containing no memoized state.
\end{theorem}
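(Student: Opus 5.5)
We prove the two directions separately. Throughout we use the standing assumption that all states are useful. We also use the structure of $\bar{M}_A(w)$: a node $(q,i)$ with $q\in\mathcal{M}$ is expanded at most once, and the tree is otherwise a copy of $T_A(w)$, pruned to accepting runs.

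\emph{($\Leftarrow$)} Suppose $p\neq q$ and $v\in\Sigma^+$ are as in the statement. By usefulness, choose $u$ with a path $q_0\to p$ on $u$ and $z$ with a path $q\to f\in F$ on $z$. Consider $w_k=u v^k z$. An accepting run can stay on the $p$-loop for $j$ copies of $v$, move $p\to q$ on one copy, then loop on $q$ for the remaining $k-j-1$ copies, and finish with $z$. In $T_A(w_k)$ this gives $k$ distinct accepting leaves, one for each $j$. Memoization could merge two of them only if they pass through the same memoized configuration $(r,i)$. While looping at $p$ the runs may share memoized configurations, but that is harmless: the non-leaf at $(r,i)$ is kept once and its subtree still contains all continuations. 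The point to make precise is that distinct choices of $j$ give runs that, after the divergence point, visit only configurations on the unmemoized $q$-cycle, plus the one path $p\to q$ per $j$. To keep the argument clean, I would first replace $v$ by a power $v^m$ so that the $p\to q$ path and the two cycles are aligned. Then I would argue that the sub-branches hanging off distinct positions $|u|+j|v|$ are disjoint, because each one reaches $q$ at a different input position and no memoized state lies on the $q$-cycle. This gives $a_A(n)\ge \Omega(n/|v|)$, which is unbounded. The main subtlety is that the $p\to q$ path itself may contain memoized states. Those configurations differ in input position for different $j$, however, so they are never deduplicated against each other.

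\emph{($\Rightarrow$)} Suppose there is no such triple. The standard EDA/IDA argument (Weber--Seidl style) then bounds the leaves of $\bar{M}_A(w)$. I would argue by contraposition: if the number of accepting leaves is larger than a constant depending on $|Q|$, we can extract the pattern. Take many distinct accepting runs of $w$ in $\bar M_A(w)$, and consider the pairs of runs that diverge at configurations which are not memoized, or not deduplicated. A pigeonhole argument on the positions of the input (tracking, for each position, the state tuple of the runs) finds a factor $v$ of $w$ and states $p\neq q$ with $p\xrightarrow{v}p$, $p\xrightarrow{v}q$, and $q\xrightarrow{v}q$. Because a memoized state on the $q$-cycle would collapse the branches, so that they would yield one leaf rather than many, the cycle at $q$ can be chosen free of memoized states.

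The main obstacle is this last step: showing that unboundedly many leaves in the \emph{memoized} tree force a $q$-cycle that avoids $\mathcal{M}$. I would handle it by noting that two runs reaching the same memoized configuration $(r,i)$ contribute only one expanded subtree. Hence every leaf is determined by its sequence of "last visits" to unmemoized cycles, and counting leaves reduces to the unmemoized subgraph. There the classical Weber--Seidl characterization applies. A further technical point is the $\varepsilon$-transitions: condition (iii) bounds the $\varepsilon$-closure length, so $\varepsilon$-paths contribute only constant factors.
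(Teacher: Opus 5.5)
The paper gives no proof of this statement; it is quoted from~\cite{van2021memoized}. Judged on its own, your sketch has a genuine gap in each direction, partly because you never fix which quantity is being counted.

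\textbf{The direction ($\Leftarrow$).} You count accepting leaves and claim the branches for different $j$ are disjoint. They are not. For $m>j,j'$, runs $j$ and $j'$ are both in configuration $(q,|u|+m|v|)$, and all $k$ runs are in $(q,|u|+k|v|)$ when $z$ starts. This is harmless on the $q$-cycle, which avoids $\mathcal{M}$. It is not harmless on the tail: if every path from $q$ to $F$ crosses $\mathcal{M}$ before the end of the input, all accepting runs merge there.

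Concretely, $\glush(a^*a^*bc)$ with $\mathcal{M}=\{b_3\}$ satisfies the hypothesis ($p=a_1$, $q=a_2$, $v=a$). Yet for every accepted word $a^nbc$, the tree $\bar{M}_A(a^nbc)$ has exactly one accepting leaf. The other $n$ runs end in truncated leaves labeled $(b_3,n+1)$, and no choice of $z$ avoids this.

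What survives is the node count. Since $q$ and its cycle are unmemoized, there are at least $m$ distinct nodes labeled $(q,|u|+m|v|)$, one per jump index. This holds even when the $p$-loop or the $p\to q$ path is memoized, because the retained copy of a memoized configuration keeps all continuations. So $M_A(uv^k)$ has $\Omega(k^2)$ nodes.

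Super-linear size of $M_A(w)$, i.e.\ the matching-time notion of the paper's next theorem, is the quantity you should prove both directions for. Counting all leaves of $\bar{M}_A(w)$, truncated ones included, would make ($\Rightarrow$) false: $\glush((a\Or a)^*)$ with $\mathcal{M}=\{a_1,a_2\}$ has $2n$ leaves in $\bar{M}_A(a^n)$ for $n\ge 2$, yet no cycle avoids $\mathcal{M}$.

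\textbf{The direction ($\Rightarrow$).} The step ``counting leaves reduces to the unmemoized subgraph, where Weber--Seidl applies'' is wrong. It would yield a pattern whose $p$-loop and $p\to q$ path also avoid $\mathcal{M}$, which is stronger than the theorem and false. Take $q_0=p$, $F=\{q\}$, $\mathcal{M}=\{r\}$, and transitions
$p\xrightarrow{a}r\xrightarrow{b}p$, $p\xrightarrow{a}s\xrightarrow{b}q$, $q\xrightarrow{a}t\xrightarrow{b}q$.
Deleting $r$ leaves a deterministic automaton. Yet on $(ab)^k$ there are $k$ accepting leaves and $\Theta(k^2)$ nodes.

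Here the blow-up does not come from many memo-free paths between two fixed configurations. Essentially, nodes labeled by an unmemoized configuration $c$ correspond to pairs consisting of a source and a memo-free path from that source to $c$. A source is the root or a reachable memoized configuration, and each is expanded once. In the example, the root and each of $(r,1),(r,3),\dots,(r,2k-3)$ contribute exactly one memo-free path into $(q,2k)$.

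Your reduction covers only the case where the number of memo-free paths is unbounded. The missing case, with unboundedly many sources at distinct positions, is the real content of the theorem, and you have to pump it. One way is to apply a pigeonhole/Ramsey argument to the states that the root-to-$c$ paths through different sources occupy at the sources' positions. This yields a $p$-loop that may pass through $\mathcal{M}$ and a memo-free $q$-loop on the same word. Your ``pigeonhole on state tuples'' sentence does not do this; even the classical Weber--Seidl IDA criterion is not a one-line pigeonhole.
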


When no states are memoized, \changeB{Theorem}~\ref{def:ida} reduces to the classical IDA
criterion for NFAs of Weber and Seidl~\cite{weber1991degree}. 
In \cite{van2021memoized}, it was shown that IDA for a $\varepsilon$-loop-free mNFAs is decidable in
cubic time and that finding a
\emph{minimum} set $M$ (that is, a set of the least cardinality) to eliminate IDA is NP-complete for general NFAs.

\begin{theorem}[Linear matching time iff no IDA~\cite{van2021memoized}]
An mNFA A has linear matching time on all inputs if and only if $A$ does not have IDA. %
\end{theorem}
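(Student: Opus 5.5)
We prove the two directions separately, using the IDA characterization of Theorem~\ref{def:ida} and Definition~\ref{defn:matching-time}. Throughout, $|w|=n$, and $|Q|$ and $|\delta|$ are treated as constants.

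\textbf{No IDA implies linear matching time.} We bound the number of nodes of any $M_A(w)$. First, consider a node labelled $(q,i)$ with $q\in\mathcal{M}$. By construction, at most one node with this label is a non-leaf. So at most $|\mathcal{M}|(n+1)$ memoized nodes have children.

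Next, cut $M_A(w)$ at these memoized non-leaf nodes. This splits it into segments: subtrees rooted at the root or at a memoized non-leaf, whose internal nodes are otherwise unmemoized. We aim to show that every segment, considered up to the next memoized node or up to depth $n$, has size bounded by a constant times $n$ divided by the number of segments. Equivalently, the total count of distinct root-to-node paths that avoid memoized states is $O(n)$.

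The key claim is this: with no IDA, the number of paths in the tree from a fixed node $(p,i)$ to nodes at level $j$ that avoid memoized non-leaves is bounded by a constant $c$ independent of $i$ and $j$. Suppose instead it is unbounded. Then a pumping argument, in the style of Weber and Seidl~\cite{weber1991degree}, gives two distinct states $p',q'$ and a word $v$ with cycles $p'\to p'$ and $q'\to q'$ and a path $p'\to q'$, all on $v$. Within these segments, the $q'$-cycle avoids memoized states. This contradicts Theorem~\ref{def:ida}.

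Here $\varepsilon$-transitions are handled via condition (iii). Between two consecutive symbol-consuming steps, a run repeats no $\varepsilon$-transition. Hence each such $\varepsilon$-stretch has length at most $|\delta|$, so it contributes only a constant factor.

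Summing over the $O(n)$ memoized non-leaves plus the root, each contributing at most $c\cdot(\text{levels until the next memoized node})$, needs care. We charge each unmemoized node to its nearest memoized ancestor. The nodes charged to one ancestor number at most $c$ per level until a memoized state occurs. Since every unmemoized cycle is non-ambiguous (by the no-IDA characterization), the charged subtree is a bounded-width strip. The total is then at most $c'$ times the number of levels, per distinct memoized configuration, and this still needs to be made linear rather than quadratic. This summation is the main obstacle. I would resolve it by noting that the strips emanating from different memoized configurations at the same level $i$ can be merged. Each unmemoized configuration $(q,j)$ appears in the tree at most $c$ times overall, not just per ancestor, because two distinct paths to the same unmemoized configuration whose multiplicity grows unboundedly again yield the forbidden pattern of Theorem~\ref{def:ida}. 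This bounds the total node count by $c\,|Q|(n+1)$.

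\textbf{IDA implies non-linear matching time.} Take $p,q,v$ as in Theorem~\ref{def:ida}. By usefulness, there is a prefix $u$ leading from $q_0$ to $p$. Consider inputs $u\,v^k$. For each $m\le k$, there is a run that follows the $p$-loop $m$ times, moves to $q$, and then follows the unmemoized $q$-loop. Runs leaving $p$ at different iterations $m$ reach $q$ at different positions. Along the $q$-loop, however, configurations $(q',j)$ with $q'\notin\mathcal{M}$ are never pruned. The remaining concern is that the $q$-cycle may contain the memoized state $p$, or that the $p$-loop is the one that gets memoized. Pruning only ever happens at memoized states, and the $q$-cycle has none, so each of the $\Theta(k)$ branch points contributes a distinct $\Theta(k)$-length unpruned path. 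This gives $\Omega(k^2)$ nodes in every $M_A(uv^k)$, which is super-linear in $|uv^k|$.
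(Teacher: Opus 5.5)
The paper gives no proof of this statement; it is imported from van der Merwe et al. So your argument has to stand on its own. In the direction ``no IDA $\Rightarrow$ linear'' it has a genuine hole, exactly at the step you call the main obstacle. Your per-ancestor bound is fine. But pumping inside a segment only ever produces patterns lying entirely in unmemoized states, and that is not enough. Take $q_0\xrightarrow{a}m$, $m\xrightarrow{a}m$, $m\xrightarrow{a}u$, $u\xrightarrow{a}u$ with only $m$ memoized. Every segment is a strip of width $1$ and the unmemoized part contains no pattern. Yet $M_A(a^n)=T_A(a^n)$ has $\Theta(n^2)$ nodes, because $(u,j)$ hangs below each of $(m,1),\dots,(m,j-1)$. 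What excludes this (the automaton does have IDA) is a pattern whose $p$-cycle runs through a memoized state, here $p=m$, $q=u$, $v=a$. So your closing claim carries the entire content of this direction. That claim says each unmemoized configuration occurs at most $c$ times overall, because unbounded multiplicity ``again yields the forbidden pattern''. It is asserted, not proved. It also cannot come from the Weber--Seidl pumping you used before: the multiplicity arises from unboundedly many distinct expanded memoized ancestors at distinct positions, so there is no common starting configuration.

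One way to supply it is as follows. Group the occurrences of an unmemoized $(q,j)$ by their nearest memoized ancestor, which is the unique expanded copy of some $(m,i)$. Your bound $c$ applies per group. It remains to bound, for each $m$, the number of positions $i$ such that $m$ is reachable at $i$ and $(q,j)$ is reachable from $(m,i)$ through unmemoized states.

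\begin{itemize}
\item Let $F_x$ be the set of states reachable at position $x$, and $S_x$ the set of states from which $(q,j)$ is reachable with all later states unmemoized.
\item Suppose $x<y$ satisfy $(F_x,S_x)=(F_y,S_y)$ and $m\in F_x\cap S_x$, and put $v=w_{x+1}\cdots w_y$.
\item Every state of $F_y=F_x$ has a $v$-predecessor in $F_x$. Walking backwards from $m$ inside $F_x$ therefore gives a state $p$ with a cycle on a power of $v$ and a path to $m$.
\item Walking forwards from $m$ inside $S_x$ gives an unmemoized cycle on a power of $v$, reachable from $m$.
\item A large common power $V$ then gives $p\to p$, $p\to h$ and $h\to h$ on $V$, with the $h$-cycle unmemoized.
\item If $h=p$, there are two distinct $V$-cycles on $p$, one through $m$ and one unmemoized. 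Rotating to the point where they diverge gives the pattern.
\end{itemize}

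Hence there are at most $|\mathcal{M}|\cdot 4^{|Q|}+1$ groups, and memoized leaves are then bounded via fan-out.

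The converse direction has a smaller gap of the same flavour. ``Pruning only happens at memoized states'' protects only the $q$-phase. The prefix $u$, the $p$-loop and the $p$-to-$q$ path may contain memoized states at which the copy of the run you follow is a pruned leaf. (Your worry about $p$ lying on the $q$-cycle is excluded by hypothesis.) Instead, for the $m$-th run, take the expanded copy of its last memoized configuration before the $q$-phase. Below it, the unmemoized remainder and $k-m-1$ turns of the $q$-cycle survive. Two such tails can share nodes only if they hang below the same expanded node, and then they separate by the end of the iteration at which the runs branch. This makes the $\Omega(k^2)$ count rigorous.
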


\begin{definition}[\IN{}, \CN{}, \I{}~\cite{van2021memoized}]
  \label{def:memoschemes}%
  For an mNFA $A=(\mathcal{M},Q,\Sigma,\delta,q_0,F)$, we consider the following sets $\mathcal{M}$ to be memoized:
  \begin{enumerate}
    \item $\textbf{\None{}}$ is the trivial case of $\mathcal{M}=\emptyset$;
    \item $\textbf{\IN{}}$ is the set of all states with in-degree $\geq 2$; 
    \item \textbf{\CN{}} (Closure Nodes) is obtained by considering all possible depth-first search orderings on $A$, and memoizing all states which are reached by a backwards edge (that is, on a path from $q$ we encounter an edge to $q$); 
    \item \textbf{\I{}} (Infinite Ambiguity Removal) is a set of states obtained by iterating over states (in \emph{some} order) in \CN{}, checking, for each, whether its removal from the set of states being memoized reintroduces IDA; and 
    \item 
$\textbf{\All{}}$ is the trivial case of taking $\mathcal{M}=Q$. 
  \end{enumerate}
\end{definition}

\begin{remark}
\None{} is used by many backtracking parsers in practice, such as the standard regular expression library in Python. Observe that $\mathrm{\CN{}}\subseteq\mathrm{\IN{}}$.
Also, \All{} is a safe but expensive memoization scheme and is not considered in the remainder of this paper.
\end{remark}

\CN{} memoization ensures that every cycle in $A$ contains at least one
memoized state, which by \changeB{Theorem}~\ref{def:ida} eliminates IDA. By definition, \I{} yields a
minimal memoization (i.e.\ we cannot remove any of the memoized states without reintroducing IDA) that might not be minimum (i.e.\ there might be sets of states of smaller cardinality that also remove IDA). We show in Section~\ref{sec:iar} that \I{} starting from \CN{} (on Glushkov) does not \emph{necessarily} produce a set of the smallest cardinality that will remove IDA. Note that~\I{} operates in some unspecified order as defined in~\cite{van2021memoized}, and different orders may yield different results. Picking a \emph{good} order is difficult (unless $\textrm{P}=\textrm{NP}$). If \I{} processes states in an order that exhausts the complement of some minimum set first, it produces that minimum set.
Since \I{} iterates over states in \CN{} and checks, for each one, whether its removal reintroduces IDA, \I{} runs in polynomial time. Each IDA check is $O(|Q|^3)$, and the number of iterations is at most $|\CN{}| \le |Q|$; thus, a naive implementation will run in time $O(|Q|^4)$.

\begin{example}
    \label{ex:java-memo}%
    Like most practical implementations, the regex matcher in Java uses backtracking search, and prior to version~9 had no protection against catastrophic backtracking scenarios. An inspection of the source code of the regex matcher in Java~9 revealed that memoization was implemented by picking a subset of \CN{} by considering only ``top-level'' closure nodes with certain characteristics. As a result, regexes such as
$((a\Or a)^*)^*$ and $((a^+)^+)^+$ still trigger catastrophic
backtracking in Java~9 and later because inner
closure nodes are excluded from memoization. This illustrates that even in industrial
practice, the question of \emph{which} states to memoize remains
important.
\end{example}

A \emph{feedback vertex set} %
of a directed graph $G=(V,E)$ is a
subset $S\subseteq V$ such that every directed cycle in $G$ contains at
least one vertex from $S$, or equivalently, $G[V\backslash S]$ (i.e.\ the subgraph of $G$ with nodes $S$ and the corresponding edges removed) is
acyclic. A \emph{minimum feedback vertex set} (MFVS) is a feedback vertex set of
minimum cardinality. Computing an MFVS is NP-complete for general
directed graphs~\cite{karp1972}, but it is solvable in linear time for
\emph{reducible} graphs~\cite{shamir1979linear}. A directed rooted graph
is \emph{reducible} if the set of back edges is the same for every DFS
ordering.

\section{MFN Memoization and the PTA Algorithms}
\label{sec:mfn}

We show that when using Algorithms~\ref{alg:thompson-pta} and~\ref{alg:glushkov-pta} (discussed in this section) to compute an MFVS, it is contained in \CN{}. In the remainder of this paper, when we refer to \MFN{}, we assume that it was computed using these algorithms. It should be pointed out that Algorithm~\ref{alg:thompson-pta} does not compute an MFVS in the presence of $\varepsilon$-loops. Those cases are handled separately in practice, but they could, of course, also be handled by memoization --- see Observation~\ref{obs:no-repeat-transition}.

\begin{definition}[MFN Memoization]
  \label{def:mfn}
  For Thompson or Glushkov NFAs, the \emph{Minimum Feedback Node} (\MFN{}) memoization scheme is obtained by using 
  Algorithms~\ref{alg:thompson-pta} or~\ref{alg:glushkov-pta}, respectively. 
\end{definition}

Since every (non-$\varepsilon$) cycle in $A$ contains at least one state
in the \MFN{} set, it follows from \changeB{Theorem}~\ref{def:ida} that \MFN{} eliminates IDA. Thus
\MFN{} provides correctness guarantees equivalent to \CN{} while often using
fewer (memoized) states. 
We note that an MFVS need not be unique; the PTA algorithms presented
below compute a specific, canonical MFVS (when ignoring $\eps$-cycles).
Since Thompson graphs are reducible~\cite{giammarresi2004thompson},
Shamir's algorithm~\cite{shamir1979linear} computes an MFVS in linear
time when given the state machine as input. The Thompson PTA (Algorithm~\ref{alg:thompson-pta}) achieves the same result from the \emph{regex parse tree}, which is convenient for integration into regex compilers
that build the parse tree before building the NFA. 
The algorithm tracks, for each subexpression $R$:
(i) $R_{\mathrm{req}}$, the MFVS computed so far for the subgraph of
    $R$;
(ii) $\mathit{isCovered}(R)$, whether every path through $R$'s
    subgraph, reading at least one symbol from $\Sigma$, already contains a state in $R_{\mathrm{req}}$; and
(iii) $\mathit{isNullable}(R)$, whether $\varepsilon\in \mathcal{L}(R)$.

\begin{algorithm}[t]
\caption{Thompson PTA}
\label{alg:thompson-pta}
\begin{algorithmic}[1]
\Require{Parse tree of regex $R$}
\Ensure{$(\mathit{isCovered}(R),\; R_{\mathrm{req}},\; \mathit{isNullable}(R))$}
\If {$R = \varepsilon$}
  \Return $(\mathit{True},\; \emptyset,\; \mathit{True})$
\ElsIf{$R = a \in \Sigma$}
  \Return $(\mathit{False},\; \emptyset,\; \mathit{False})$
\ElsIf{$R = A^?$}
 \State $c,r,n \gets \Call{ThompsonPTA}{A}$
 \State \Return $(c,\; r,\; \mathit{True})$
\ElsIf{$R = A^*$ \textbf{or} $R = A^+$}
  \State $c,r,n \gets \Call{ThompsonPTA}{A}$
  \If{$\lnot c$}
    \Return \label{alg:line:r-star} $(\mathit{True},\; 
    r \cup \{\entry(R)\},\; n \lor (R = A^*))$ 
  \Else \, \Return $(c,\; r,\; n \lor (R = A^*))$
  \EndIf
\ElsIf{$R = A \Or A'$}
  \State $c,r,n \gets \Call{ThompsonPTA}{A}$;\,
        $c',r',n' \gets \Call{ThompsonPTA}{A'}$
\State \Return $(c \land c',\; r \cup r',\; n \lor n')$
\ElsIf{$R = A \cdot A'$ \label{alg:line:a-cdot-ap}}
  \State $c,r,n \gets \Call{ThompsonPTA}{A}$;\,
         $c',r',n' \gets \Call{ThompsonPTA}{A'}$
  \If{$n$ and $n'$} \Return $(c \land c',\; r \cup r',\; \mathit{True})$
  \ElsIf{$n'$}      \Return $(c,\; r \cup r',\; \mathit{False})$
  \ElsIf{$n$}       \Return $(c',\; r \cup r',\; \mathit{False})$
  \Else\,\Return $(c \lor c',\; r \cup r',\; \mathit{False})$
  \EndIf
\EndIf
\end{algorithmic}
\end{algorithm}

The key insight is the \emph{domination property} of Thompson graphs:
$\entry(A)$, the state corresponding to a closure operator $R = A^*$ (or $A^+$),
dominates every state inside $A$'s subgraph, in the sense that any path
entering $A$'s subgraph must first pass through $R$'s closure state.
Hence, if all paths through $A$ already contain a memoized state
($\mathit{isCovered} = \mathit{True}$), the closure state should not be
added to $R_{\mathrm{req}}$; otherwise, it must.

Let us put all of this together to demonstrate correctness. Ignoring epsilon cycles has the obvious meaning; where the MFVS definition states ``every directed cycle,'' we restrict our attention to cycles that contain at least one transition labeled by a symbol from $\Sigma$. 
\begin{theorem}
  \label{thm:thompson-mfvs}%
  For every regex $R$, the set $R_{\mathrm{req}}$ computed by Algorithm~\ref{alg:thompson-pta} is an MFVS of $\thom(R)$ when ignoring epsilon cycles. Furthermore, $R_{\mathrm{req}} \subseteq \mathrm{\CN{}}(\thom(R))$.
\end{theorem}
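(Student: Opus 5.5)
The plan is a structural induction on the parse tree of $R$. We strengthen the hypothesis so that it also covers the auxiliary outputs. For every subexpression $S$ with Thompson subautomaton $\thom(S)$ (entry state $s$, exit state $t$), we prove four claims together. (a) $S_{\mathrm{req}}$ is a feedback vertex set of $\thom(S)$ for non-$\eps$ cycles. (b) $|S_{\mathrm{req}}|$ equals the minimum size of such a set, and in fact equals the maximum number of vertex-disjoint non-$\eps$ cycles in $\thom(S)$. (c) $\mathit{isCovered}(S)$ is true iff every $s$-to-$t$ path reading at least one symbol meets $S_{\mathrm{req}}$. (d) $\mathit{isNullable}(S)$ is true iff $\eps\in\mathcal{L}(S)$, which is standard.

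Claim (b) gives minimality through a matching lower bound: if there are $k$ vertex-disjoint cycles, every feedback vertex set needs at least $k$ vertices. So the key invariant is a packing of $|S_{\mathrm{req}}|$ disjoint non-$\eps$ cycles. To make the closure step work, we also require this packing to avoid any $s$-to-$t$ path that reads a symbol and misses $S_{\mathrm{req}}$, whenever $\mathit{isCovered}(S)$ is false. One way to arrange this is to choose all cycles inside strictly nested closure subautomata whose entry is in $S_{\mathrm{req}}$.

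The cases go as follows.
\begin{itemize}
\item $\eps$ and $a$ are trivial.
\item $A^?$ adds only $\eps$-edges from entry to exit. These create no new cycles, and every symbol-reading path still passes through $\thom(A)$, so coverage is unchanged.
\item For $A\Or A'$ and $A\cdot A'$, the Thompson construction adds no back edge. Every cycle therefore lies entirely inside $\thom(A)$ or $\thom(A')$, and the union of the two packings is still disjoint, which gives (a) and (b). For (c) in the concatenation case, a symbol-reading path reads a symbol in $A$ or in $A'$. If $A$ is nullable, a path can skip through $A$ via $\eps$ only, so coverage must come from $A'$; the remaining subcases are symmetric. This is exactly the case table of Algorithm~\ref{alg:thompson-pta}.
\item For $A^*$ and $A^+$, the only new cycles use the back edge $t\to\entry(A)$. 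By the domination property, every such cycle contains an $s$-to-$t$ path through $\thom(A)$, and it is non-$\eps$ iff that path reads a symbol.
\begin{itemize}
\item If $c$ holds, every new non-$\eps$ cycle already meets $r$, so $r$ stays a feedback vertex set and minimality is inherited.
\item If $c$ fails, there is a symbol-reading $s$-to-$t$ path $P$ avoiding $r$. Closing $P$ with the back edge gives a non-$\eps$ cycle, and by the strengthened invariant this cycle is disjoint from the packing of $A$. The packing grows by one, matching the addition of $\entry(R)$. Since $\entry(R)$ lies on every new cycle, $r\cup\{\entry(R)\}$ is a feedback vertex set, and $\mathit{isCovered}=\mathit{True}$ then holds by domination.
\end{itemize}
\end{itemize}

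The main obstacle is maintaining the disjointness invariant in the closure case: the uncovered path $P$ must avoid not just $r$ but the entire chosen cycle packing. I would first try to prove this directly from Thompson nesting. Each packed cycle sits inside a closure subautomaton $\thom(B)$ whose entry is in $r$, and any path that enters $\thom(B)$ must pass through $\entry(B)$ to do so. A path avoiding $r$ therefore cannot touch those cycles, except possibly at exit states, which I would avoid by choosing the cycles to exclude the exit of $\thom(B)$. If this direct argument becomes messy, the fallback is to invoke reducibility together with Shamir's min--max theorem for reducible graphs~\cite{shamir1979linear}.

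Finally, for $R_{\mathrm{req}}\subseteq\CN(\thom(R))$: every added state is $\entry(A)$ for some closure. In any DFS from $q_0$, $\entry(A)$ dominates $\thom(A)$, so the edge from the exit of $\thom(A)$ back to $\entry(A)$ is a back edge (by reducibility). This places $\entry(A)$ in \CN{}.
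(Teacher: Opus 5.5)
The step that fails is claim (c) in the concatenation case: ``if $A$ is nullable, a path can skip through $A$ via $\eps$ only, so coverage must come from $A'$.'' This needs an $\eps$-path through $\thom(A)$ that \emph{avoids} $r$, and nullability does not provide one. Suppose $A=B^+$ with $B$ nullable and uncovered. The algorithm adds $\entry(B)$, and $\thom(B^+)$ has no bypass edge, so every entry-to-exit path of $\thom(B^+)$, including every $\eps$-path, passes through $\entry(B)$.

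Concretely, take $R=((a^?)^+b)^*$ and let $x_0=\entry(a^?)$.
Algorithm~\ref{alg:thompson-pta} returns $(\mathit{True},\{x_0\},\mathit{True})$ for $(a^?)^+$. At the concatenation it takes the ``only $n$'' branch and reports $(a^?)^+b$ as uncovered, although every symbol-reading path through $\thom((a^?)^+b)$ meets $x_0$. So (c) is false here, and the path $P$ you invoke in the closure step does not exist. The outer star nevertheless adds a second state: the fresh initial state of $\thom((a^?)^+)$, which is distinct from $x_0$ (cf.\ Example~\ref{ex:a-plus-a-plus}). Yet every non-$\eps$ cycle of $\thom(R)$ contains $x_0$, namely the $a$-loop and every outer loop through $b$. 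Hence $\{x_0\}$ is a feedback vertex set of size $1$, while $|R_{\mathrm{req}}|=2$.

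So the minimality claim cannot be proved for the algorithm as written. The paper's own proof makes the same move (``the path for $\eps$ in $A$ combined with any non-covered non-$\eps$ path in $A'$\ldots''). The defect therefore lies in the algorithm and the statement, not only in your write-up.

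The repair is to select the concatenation branch using $f,f'$ instead of $n,n'$. Here $f$ means ``some $\eps$-path from entry to exit avoids the req set.'' It is computed by $f(\eps)=\mathit{True}$, $f(a)=\mathit{False}$, $f(A^?)=f(A^*)=\mathit{True}$, $f(A^+)=f(A)\land c(A)$, $f(A\Or A')=f\lor f'$, and $f(A\cdot A')=f\land f'$. With this change (c) holds and your packing argument goes through.

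Your route is then genuinely stronger than the paper's. The paper only argues that no chosen state can be dropped, which is inclusion-minimality rather than minimum cardinality. Your $|R_{\mathrm{req}}|$ vertex-disjoint non-$\eps$ cycles give the matching lower bound outright. Also, $\entry(A)\notin r$ is automatic, because $\entry(A)$ lies on $P$.

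One correction: you cannot choose the packed cycles to avoid the exit of $\thom(B)$, since the cycle closed by $B$'s back edge contains that exit. You do not need to, because a path starting outside $\thom(B)$ can reach any state of $\thom(B)$, its exit included, only through $\entry(B)$. Your \CN{} argument is fine: a dominator is an ancestor in every DFS tree, so the closing edge is a back edge.
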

\begin{proof}
  We proceed by structural induction on the parse tree of $R$. For each subexpression $E$, the algorithm tracks triples $(c,E_{\mathrm{req}},n)$ where $E_{\mathrm{req}}$ is a MFVS for $E$. The covered status (\textit{isCovered}) $c$ is true iff all non-$\eps$ \emph{paths} in $E$ contain at least one state from $E_{\mathrm{req}}$, and the nullability (\textit{isNullable}) $n$ is true iff $\eps\in \mathcal{L}(E)$. The algorithm tracks all three together to perform the computation in a single linear pass, but we can consider the parts separately. Nullability, the $n$, is as usual and can easily be separately verified by the reader (note the case on line~\ref{alg:line:r-star}, which amounts to $E^*$ always being nullable but $E^+$ being nullable iff $E$ is nullable).
  
  Next, $E_{\mathrm{req}}$ is an MFVS for $E$. For the base cases, $E\in \Sigma \cup \{\varepsilon\}$ this is trivially true, as $\thom(E)$ then has no loops and $E_{\mathrm{req}}=\emptyset$. The cases for $E=A^?$, $E=A\Or A'$, and $E=A \cdot A'$ similarly never create a loop under the Thompson construction (see Figure~\ref{fig:thompson-example}), making the union of the vertex sets covering the non-$\eps$ loops of the parts cover the non-$\eps$ loops of the new NFA. The cases that create loops are $E=A^*$ and $E=A^+$, and these will clearly create a non-$\eps$ loop that needs to be covered iff $A$ has a non-$\eps$ path that does not contain an already covered state. This is tracked by the covered part of the triple, $c$, so if $c$ is false, we add the entry state of $\thom(A)$ to $E_{\mathrm{req}}$, as the entry state dominates the subgraph; all loops in $E^*$ must use that state. To see that this vertex feedback set is minimal, simply observe that each subautomaton that gets a memoized state (i.e.\ it corresponds to a *- or +-closure) cannot ``lose'' that memoized state, as it then has an uncovered cycle that is necessarily both reachable and useful. As such, we cannot avoid adding the memoized states we add, and the way the construction continues cannot later make it unnecessary.
  
   Finally, the correctness of the covered status $c$ does not depend on $E_{\mathrm{req}}$, avoiding circularity. As base cases, $E=\alpha \in \Sigma$ is not covered, but $E=\eps$ is, since $E=\alpha$ has a path labeled $\alpha$ that contains no states in $E_{\mathrm{req}}=\emptyset$, whereas $\eps$ has no non-$\eps$ path. Proceeding inductively, we have a straightforward case analysis; for example, the $E=A\cdot A'$ case on line~\ref{alg:line:a-cdot-ap}, if both $A$ and $A'$ are covered, then so is $E$, and conversely, if neither is covered, neither is $E$. If $A$ is covered but nullable, and $A'$ is not covered, the path for $\eps$ in $A$ combined with any non-covered non-$\eps$ path in $A'$ creates a non-covered non-$\eps$ path in $E$, making it not covered (the same argument can be repeated with $A$ and $A'$ reversed).

  We have $R_{\mathrm{req}} \subseteq \mathrm{\CN{}}(\thom(R))$ as the only states added to any $E_{\mathrm{req}}$ are $\entry(A)$ for some $E=A^*$ or $E=A^+$, which is precisely when a back edge is added to the state $\entry(A)$. Refer to Figure~\ref{fig:thompson-example}.
\end{proof}

Since Glushkov NFAs are not always reducible, Shamir's algorithm does
not apply directly. Instead, we (again) use the parse tree structure of the
regex to compute the MFVS of $\glush(R)$.
For each subexpression $R$, the algorithm computes:
(i) $R_{\mathrm{req}}$, the MFVS of $\glush(R)$;
(ii) $R_{\mathrm{op}}$, the set of states that would need to be added to $R_{\mathrm{req}}$ if a closure operator were applied to $R$, i.e.\ $R^*_{\mathrm{req}} = R_{\mathrm{req}} \cup R_{\mathrm{op}}$, and;
(iii) $\mathit{isNullable}(R)$.

\begin{algorithm}[t]
\caption{Glushkov PTA}
\label{alg:glushkov-pta}
\begin{algorithmic}[1]
\Require{Parse tree of regex $R$}
\Ensure{$(R_{\mathrm{op}},\; R_{\mathrm{req}},\; \mathit{isNullable}(R))$}
\If{$R = \varepsilon$}
\Return $(\emptyset,\; \emptyset,\; \mathit{True})$
\ElsIf{$R = a$}
  \Return $(\{a_1\},\; \emptyset,\; \mathit{False})$ \Comment{Where $a_1$ is the state reached on reading $a$}
\ElsIf{$R = A^?$}
 \State $o,r,n \gets \Call{GlushkovPTA}{A}$
 \State \Return $(o,\; r,\; \mathit{True})$
\ElsIf{$R = A^*$ \textbf{or} $R = A^+$}
  \State $o,r,n \gets \Call{GlushkovPTA}{A}$
  \State \label{line:gl:stars} \Return $(\emptyset,\; o \cup r,\; n \lor (R = A^*))$
\ElsIf{$R = A \Or A'$}
  \State $o,r,n \gets \Call{GlushkovPTA}{A}$
  \State $o',r',n' \gets \Call{GlushkovPTA}{A'}$
  \State \Return $(o \cup o',\; r \cup r',\; n \lor n')$
\ElsIf{$R = A \cdot A'$}
  \State $o,r,n \gets \Call{GlushkovPTA}{A}$;\,
         $o',r',n' \gets \Call{GlushkovPTA}{A'}$
  \If{$n$ and $n'$}   \Return $(o \cup o',\; r \cup r',\; \mathit{True})$
  \ElsIf{$n'$}        \Return $(o,\; r \cup r',\; \mathit{False})$
  \ElsIf{$n$}         \Return $(o',\; r \cup r',\; \mathit{False})$
  \Else\If{$|o| \leq |o'|$} \Return $(o,\; r \cup r',\; \mathit{False})$
    \Else\,                \Return $(o',\; r \cup r',\; \mathit{False})$
    \EndIf
  \EndIf
\EndIf
\end{algorithmic}
\end{algorithm}

The $R_{\mathrm{op}}$ set tracks the smallest set of states that
cover all paths through $R$'s subgraph, enabling the algorithm to make
an informed choice at the concatenation operator when neither operand is
nullable. In line 2, $\{a_1\}$ denotes the set consisting of the final state of $\glush(a)$.

\begin{theorem}
  \label{thm:glushkov-mfvs}%
  For every regex $R$, the set $R_{\mathrm{req}}$ computed by Algorithm~\ref{alg:glushkov-pta} is an MFVS of $\glush(R)$.
  Furthermore, $R_{\mathrm{req}} \subseteq \mathrm{\CN{}}(\glush(R))$.
\end{theorem}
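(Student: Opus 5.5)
Throughout, ``path through $R$'' means a path in $\glush(R)$ from $q_0$ whose first step leaves $q_0$ and whose last state is final for $R$. We prove Theorem~\ref{thm:glushkov-mfvs} by structural induction on $R$ with a strengthened invariant on the triple $(R_{\mathrm{op}}, R_{\mathrm{req}}, \mathit{isNullable}(R))$:
\begin{enumerate}
\item[(a)] $R_{\mathrm{req}}$ is an MFVS of $\glush(R)$;
\item[(b)] $R_{\mathrm{op}}\cap R_{\mathrm{req}}=\emptyset$, and $R_{\mathrm{op}}$ is a minimum-size set of non-initial states such that every path through $R$ avoiding $R_{\mathrm{req}}$ meets $R_{\mathrm{op}}$;
\item[(c)] $\mathrm{opt}(R^*)=|R_{\mathrm{req}}|+|R_{\mathrm{op}}|$, where $\mathrm{opt}$ denotes the MFVS size.
\end{enumerate}
We prove (c) by comparing an arbitrary feedback vertex set of $\glush(R^*)$ with these two quantities. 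Nullability is routine. The inclusion in \CN{} is handled at the end.

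Cycles of $\glush(R^*)$ have two forms. Some lie entirely inside $\glush(R)$. The others use at least one new edge, which runs from a final state of $R$ back to a state reachable in one step from $q_0$. Such a cycle is a concatenation of pieces of paths through $R$.

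For the lower bound, let $S$ be any feedback vertex set of $\glush(R^*)$. Its restriction to $R$'s states covers all cycles of $\glush(R)$, so $|S\cap Q_R|\ge|R_{\mathrm{req}}|$. We must also show an excess of at least $|R_{\mathrm{op}}|$. The clean way is a packing argument: exhibit $|R_{\mathrm{req}}|+|R_{\mathrm{op}}|$ cycles in $\glush(R^*)$ that are pairwise vertex-disjoint, or disjoint enough to force that many vertices. This suggests strengthening the invariant to say that both sizes are witnessed by disjoint cycle and path families.

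Each operator case then goes as follows.
\begin{itemize}
\item Base cases $a$ and $\varepsilon$: the single path through $a$ needs the state $a_1$.
\item Union: Glushkov union merges the initial states and creates no new cycles. Hence both the requirement sets and the op sets take unions, and the disjoint witness families combine.
\item Star or plus (line~\ref{line:gl:stars}): the new requirement is $o\cup r$ by (c). The resulting op set is $\emptyset$ because every state on a path through $R^*$ already lies in a covered loop. This needs checking: applying a second star only adds edges between states that are already in strongly connected loops covered by $o\cup r$, so no new uncovered cycle arises.
\item Concatenation $A\cdot A'$: the new edges go from $A$ to $A'$ only, so no new cycles appear inside $\glush(A\cdot A')$ and the requirement set is $r\cup r'$. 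For the op set, a path through $A\cdot A'$ is a path through $A$ followed by a path through $A'$, unless one side is nullable and can be skipped. If neither side is nullable, hitting either component suffices, and minimality gives $\min(|o|,|o'|)$. If exactly $A$ is nullable, some paths skip $A$, so $o'$ is needed. If both are nullable, both $o$ and $o'$ are needed.
\end{itemize}
Disjointness of the witness families is preserved in each case because distinct subexpressions have disjoint linearized states.

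Membership in \CN{} follows because every state added to a requirement set is added at a star or plus node as an element of $o$. Every such state lies on a cycle through a new back edge of $\glush(A^*)$. Choose a DFS that enters that cycle first at the given state; then the cycle closes with an edge back to that state, so it is a back-edge target. This holds because each state of $o$ is reachable from $q_0$, lies on such a cycle, and can be chosen as the first state of the cycle that the DFS visits.

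The main obstacle is the lower bound in (c), i.e.\ minimality. Greedily choosing $\min(|o|,|o'|)$ at concatenation must not be beaten by a mixed choice once the expression is later starred. A mixed choice could hit some paths in $A$ and others in $A'$. The disjoint path packing is what rules this out. I would try to maintain a family of $|R_{\mathrm{op}}|$ pairwise vertex-disjoint paths through $R$ avoiding $R_{\mathrm{req}}$, in Menger style. For concatenation with neither side nullable, pair paths of the smaller family with paths of the other family. The delicate point is that when starring, these paths must close into disjoint cycles; this needs the final state and first state of each path to be distinct across the family, which has to be verified case by case.
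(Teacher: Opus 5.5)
\textbf{The inclusion in \CN{} is where the argument fails.} Being reachable from $q_0$ and lying on a cycle through a new last-to-first edge does not let a DFS enter that cycle first at a prescribed state. In $\glush((ab)^*)$ the edges are $q_0\to a_1$, $a_1\to b_2$, $b_2\to a_1$. Every DFS discovers $a_1$ before $b_2$, so $\CN{}=\{a_1\}$, although $b_2$ meets both of your hypotheses. Algorithm~\ref{alg:glushkov-pta} outputs $\{a_1\}$ here only because the tie $|o|\le|o'|$ is broken towards the left operand. Breaking it the other way gives $\{b_2\}$, which is still an MFVS but is not contained in \CN{}. Your argument never uses how $E_{\mathrm{op}}$ is chosen, so it cannot be correct.

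The usable criterion is this: $v$ is a back-edge target in some DFS from $q_0$ iff some cycle through $v$ and some path from $q_0$ to $v$ share only $v$. (Follow that path first; the white-path theorem then makes the predecessor of $v$ on the cycle a descendant of $v$.) For $v\in E_{\mathrm{op}}$ you must show by induction that no state $w\neq v$ lies on every path of $\glush(E)$ from $\mathit{first}(E)$ to $v$. Union, $?$, the left operand of a concatenation, and the right operand when the left is nullable all reduce to the operand.

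The essential case is $E=A\cdot A'$ with neither operand nullable and $v\in A'_{\mathrm{op}}$. There $|A_{\mathrm{op}}|>|A'_{\mathrm{op}}|\ge 1$ forces $|A_{\mathrm{op}}|\ge 2$, so $A$ has two vertex-disjoint paths (your own packing family). Hence no separator lies in $A$, and by induction none lies in $A'$. Menger's theorem then gives two paths from $\mathit{first}(E)$ to $v$ meeting only at $v$ (trivially so if $v\in\mathit{first}(E)$). One is the DFS approach. The other closes a cycle, preceded by a path from $v$ to $\mathit{last}(E)$ and a new edge. That forward path avoids both, because $v$, like every op-state, lies under no star of $E$ and hence on no cycle of $\glush(E)$. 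Edges into the region of $E$ in $\glush(R)$ reach all of $\mathit{first}(E)$, so the witness transfers to $\glush(R)$. This is what the paper's final paragraph uses when it speaks of ``two different ways to reach'' $A'_{\mathrm{op}}$.

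\textbf{Minimality is still a plan, but a sound one that differs from the paper's route.} The paper argues that the cycles behind $A_{\mathrm{req}}$ and the paths behind $A_{\mathrm{op}}$ are ``independent'', and uses the Cartesian product of paths at a non-nullable concatenation. Your route is also viable: a family of $|R_{\mathrm{req}}|+|R_{\mathrm{op}}|$ vertex-disjoint cycles in $\glush(R^*)$ does exist, and it reduces both lower bounds to counting.

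Your disjointness justification, however, fails at the star step. There the closed-up op-paths of $A$ and the witness cycles of $A_{\mathrm{req}}$ live in the same operand $A$, so ``distinct subexpressions have disjoint linearized states'' says nothing. Add to the invariant that the witness paths, like $R_{\mathrm{op}}$ itself, use only positions under no star of $R$. This holds for $a$, is preserved by $\Or$, $\cdot$ and $?$, and the family is emptied at a star. Witness cycles lie inside starred subexpressions, so the two families are disjoint, and $R_{\mathrm{op}}\cap R_{\mathrm{req}}=\emptyset$ follows as well.

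The ``delicate point'' you flag is not one. $\glush(A^*)$ contains every edge from $\mathit{last}(A)$ to $\mathit{first}(A)$, so each witness path closes into a cycle on its own vertices, and disjoint paths give disjoint cycles. Your pairing at a non-nullable concatenation is exactly the Cartesian-product argument. Completed this way, your route gives a cleaner minimality proof than the paper's independence claim.
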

\begin{proof}
  We proceed by structural induction on the depth of the parse tree of $R$. For each subexpression $E$ the algorithm computes the tuple $(E_{\mathrm{op}}, E_{\mathrm{req}}, n)$, it is again easy for the reader to verify that the nullability of the expression, $n$, is computed correctly. It remains to show that $E_{\mathrm{req}}$ is an MFVS 
  for $\glush(E)$, and that $E_{\mathrm{op}}$ similarly is a minimum cardinality set containing at least one state from every non-epsilon path through $\glush(E)$.

  We begin with $E_{\mathrm{req}}$, assume $E_{\mathrm{op}}$ is correct and minimal. For the base cases $E=\eps$ is a single accepting state and $E=a\in \Sigma$ is the unique minimal DFA, neither of which contains cycles, so $E_{\mathrm{req}}=\emptyset$.

  The NFA $\glush(A \Or A')$ can be obtained as the union of $\glush(A)$ and $\glush(A')$, merging their initial states. Since the initial state has no incoming transitions this creates no new cycles or paths, and the union of the states covering the cycles/paths of the subautomata will cover the cycles and paths of the resulting automaton. $E=A^?$ can be obtained by marking the initial state of $\glush(A)$ final, causing no change to cycles or paths in $\glush(A)$. For $E=A^*$ and $E=A^+$ we rely on assuming $A_{\mathrm{op}}$ correct: under the closure every path becomes a cycle, and as $A_{\mathrm{req}}$ covers all 
  cycles and $A_{\mathrm{op}}$ covers all 
  non-epsilon paths 
  not already covered by $A_{\mathrm{req}}$ we have $A_{\mathrm{op}} \cup A_{\mathrm{req}}$ covering all non-epsilon cycles in $E$, it remains to argue that if $A_{\mathrm{op}}$ and $A_{\mathrm{req}}$ are both minimal then so is $E_{\mathrm{req}}$. This is the case since the paths giving rise to states in $A_{\mathrm{op}}$ and the cycles giving rise to states in $A_{\mathrm{req}}$ are necessarily independent: states are only freshly added to $A_{\mathrm{req}}$ on line~\ref{line:gl:stars}, where $A_{\mathrm{op}}$ is simultaneously emptied. New uncovered paths may then be added, by union or concatenation, but neither case will intertwine the paths with existing cycles (in the case of concatenation the path may ``reach'' the cycle, but it will either fully follow the cycle, or fully pass it by). Finally, for $E=A \cdot A'$ there are no transitions from $A'$ to $A$, and the transitions added go from $A$ to $A'$, so no cycles can be created, so $E_{\mathrm{req}}$ is simply the covers for the parts, $A_{\mathrm{req}} \cup A'_{\mathrm{req}}$.

  It remains to show $E_{\mathrm{op}}$ correct and minimal, observe that this does not rely on the correctness of $E_{\mathrm{req}}$, causing no circularity. For $E=\eps$ no path exists, but for $E=a\in \Sigma$ a path labeled $a$ exists, and the state $a_1$ is added to $E_{\mathrm{op}}$. $E=A^?$ and $E=A\Or A'$ proceed similarly to $E_{\mathrm{req}}$. For $E=A^*$ and $E=A^+$ observe that $E$ will have no path that is not also on a cycle, so $E_{\mathrm{op}}=\emptyset$ as all paths are covered by $E_{\mathrm{req}}$. The one complex case is $E=A \cdot A'$, where paths get concatenated, and the nullability of $A$ and $A'$ comes into play. If $A$ is nullable then the paths of $A'$ all need to be covered in $E$ as well, and vice versa. However, if neither $A$ nor $A'$ is nullable all non-epsilon paths will consist of non-epsilon paths through $\glush(A)$ and $\glush(A')$ glued together, and the path can be covered in either, as such we pick the smaller of $A_{\mathrm{op}}$ and $A'_{\mathrm{op}}$. Observe that the full cartesian product of paths exists (that is, every choice of one path from $A$ and one from $A'$ together form a path through $A\cdot A'$), so we cannot do better by covering some paths in $\glush(A)$ and some in $\glush(A')$, if there is a single uncovered path in each that creates an uncovered path in $\glush(E)$.

  Finally, we can show that for each subexpression $E$, $E_{\mathrm{op}} \subseteq \mathrm{\CN{}}(\glush(E^*))$ and, since $R_{\mathrm{req}}$ is entirely constructed from unions of these $E_{\mathrm{op}}$ sets, we have that $R_{\mathrm{req}} \subseteq \CN{}$. The base cases $E=\eps$ and $E=a\in \Sigma$ are trivial. 
  The cases of $E=A^?$, $E=A^*$, and $E=A^+$ follow directly from the inductive assumption. 
  In the case of $E=A\Or A'$ it is clear that $\mathrm{\CN{}}(\glush(A)) \subseteq \mathrm{\CN{}}(\glush(E))$ and $\mathrm{\CN{}}(\glush(A')) \subseteq \mathrm{\CN{}}(\glush(E))$ and thus we have the desired result. 
  The last case is that of $E=A \cdot A'$.
  When either $A$ or $A'$ is nullable then $E_{\mathrm{op}}$ is selected such that a DFS ordering of $\glush(E^*)$ exists where the states in $E_{\mathrm{op}}$ are visited first and then have back-edges incident either via the back-edges created by the closure operator or via the nullable paths through the corresponding subgraph.
  In the case where neither $A$ nor $A'$ is nullable then the smallest cardinality set is selected as $E_{\mathrm{op}}$. When $|A_{\mathrm{op}}| \leq |A'_{\mathrm{op}}|$ we select $A_{\mathrm{op}}$ which are states that can be visited first in a DFS ordering and thus have the desired property following the same reasoning as above.
  Then the last case would be $|A_{\mathrm{op}}| > |A'_{\mathrm{op}}|$. If $A'_{\mathrm{op}} = \emptyset$ then we have the result trivially. Thus if $A'_{\mathrm{op}} \neq \emptyset$, then $|A_{\mathrm{op}}| \geq 2$ which means that $A$ has at least two different paths. As a result, applying a closure operator and doing a DFS will allow for two different ways to reach the states of $A'_{\mathrm{op}}$. In the DFS ordering we can follow the first path, reach the states of $A'_{\mathrm{op}}$, then follow the new back-edges back to states of $A_{\mathrm{op}}$, follow the second path, and again reach the states of $A'_{\mathrm{op}}$. This then leaves us with the desired result for all cases.

\end{proof}

\begin{corollary}
  Both PTAs run in $O(|E|)$ time and space, where $|E|$ is the size of
  the regex.
\end{corollary}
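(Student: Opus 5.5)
The plan is to show that both algorithms make one post-order traversal of the parse tree, do $O(1)$ work per node apart from set manipulations, and then to choose set representations that keep those manipulations within the linear budget. The parse tree of $E$ has $O(|E|)$ nodes, and each of Algorithms~\ref{alg:thompson-pta} and~\ref{alg:glushkov-pta} makes exactly one recursive call per child. Every case does a constant amount of Boolean work (nullability, covered status) plus one of three things: returns a singleton or empty set, takes a union of two sets, or selects one of two sets.

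First I fix the representations. I will not store $R_{\mathrm{req}}$ as separate sets at each node, because repeated copying could cost quadratic time. Instead I use a single global output set $\mathcal{M}$, a boolean array indexed by states with $O(|E|)$ entries, since Thompson and Glushkov both give $O(|E|)$ states.

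\textbf{Thompson PTA.} Every union $r \cup r'$ is disjoint, because the sets come from disjoint subtrees. So the algorithm can simply mark $\entry(R)$ in $\mathcal{M}$ on line~\ref{alg:line:r-star} and return only the two Booleans $(c,n)$. This costs $O(1)$ per node, so $O(|E|)$ in total. The recursion stack uses $O(\text{depth}) \subseteq O(|E|)$ space.

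\textbf{Glushkov PTA.} The sets $r$ are handled globally in the same way. The sets $o$ need more care.

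1. I represent each $o$ as a linked list that also records its length. Union of the disjoint lists $o$ and $o'$ then costs $O(1)$ by splicing.
2. The comparison $|o| \le |o'|$ costs $O(1)$ using the stored lengths.
3. When one list is chosen at a concatenation, the other is simply discarded in $O(1)$.
4. On line~\ref{line:gl:stars}, $o \cup r$ means flushing $o$ into $\mathcal{M}$, which costs $O(|o|)$.

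The hardest step is bounding the total flushing cost. Each state $a_i$ is created exactly once, at its leaf. A state can be flushed only while it belongs to the current list, and after a flush the list becomes empty. Hence each state is flushed at most once, and the total flushing cost is at most the number of symbol positions, which is $\le |E|$.

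Space is $O(|E|)$, since the list cells number at most one per position, plus the array and the stack. Summing the costs gives $O(|E|)$ time and space for both algorithms.
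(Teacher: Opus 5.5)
Your proof is correct. It uses the same skeleton the paper relies on: one post-order pass over the $O(|E|)$-node parse tree, with constant Boolean work per node. The paper, however, gives no argument for this corollary beyond the remark in the proof of Theorem~\ref{thm:thompson-mfvs} that the triples are computed ``in a single linear pass''. It tacitly treats the set unions and the choice between $o$ and $o'$ in Algorithms~\ref{alg:thompson-pta} and~\ref{alg:glushkov-pta} as constant-time.

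What you add is a justification of that assumption, and it matters. Returning freshly built sets costs quadratic time on, e.g., a long concatenation $a^*a^*\cdots a^*$, where the $r$ component grows by one per level. The same happens on a long alternation $a\Or a\Or\cdots\Or a$ for $o$ in the Glushkov PTA.

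Your global marking array is sound because, in both algorithms, the $r$ component is only ever unioned upward and is never discarded or inspected. You should say this explicitly, since it is exactly what separates $r$ from $o$, which is discarded at concatenations. For the same reason, the disjointness you claim for the Thompson unions is unnecessary, since marking is idempotent.

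For the $o$ lists, disjointness does matter: it keeps spliced lists as sets and makes the stored lengths add. It holds because each Glushkov position originates at a unique leaf. Splicing in $O(1)$ needs a tail pointer or a circular list. Your charging argument is correct: each cell lies in at most one live list, and a flush on line~\ref{line:gl:stars} empties that list, so all flushes are bounded by the number of positions.

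Since the PTAs never build the NFA, you should also note that the marked states must be addressable in $O(1)$ from the current parse-tree node, e.g.\ via a numbering fixed during parsing.

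In short, the paper's treatment buys brevity, whereas yours actually proves the linear bound.
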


This matches the time complexity of determining the memoized states for \CN{} and \IN{}, making \MFN{} a
drop-in replacement with at-most-equal and often  strictly smaller memory usage.

\begin{example}[Thompson vs.\ Glushkov under \MFN{}]
  \label{ex:a-plus-a-plus}
  For $(a^+)^+$, the Thompson NFA has two closure nodes (for the inner
  and outer positive closures). 
  Once the inner
  $+$-state is memoized, the outer $+$-cycle is also broken. Thus
  $|\MFN{}(\thom((a^+)^+))| = 1$. The Glushkov NFA for $(a^+)^+$
  collapses, apart from the initial state, to a single state with a self-loop, and thus we also have  $|\MFN{}(\glush((a^+)^+))| = 1$. In general, when using the Thompson construction,  we memoize either the same number or fewer states compared to when using the Glushkov construction (see Section~\ref{sec:thompson-vs-glushkov}).
\end{example}

\section{IAR and Minimal vs Minimum Memoization}
\label{sec:iar}

The \I{} scheme~\cite{van2021memoized} takes \CN{} as a starting point and
iteratively removes states whose memoization is not required to prevent
IDA. Formally, a state $q \in \mathrm{\CN{}}$ is \emph{removable} if the
mNFA obtained by un-memoizing $q$ still satisfies the condition of
\changeB{Theorem}~\ref{def:ida} being false. Although \I{} is guaranteed to
eliminate IDA and to be a subset of \CN{}, it is \emph{not} guaranteed to
produce a set of the smallest cardinality that removes IDA after memoization. We demonstrate this
with the following example.

\begin{example}
  \label{ex:iar-not-minimal}

Consider $r_1 = ((aa^? \Or aa^?)aa^?)^*$ with the Glushkov construction. 
 Linearization of $r_1$ gives the states $a_1,\ldots, a_6$ corresponding
  to the six occurrences of $a$. \CN{} memoizes $\{a_1, a_3, a_5\}$
  (the states reachable via back edges in some DFS). Note, in some DFS ordering, $a_5$ does appear as a back-edge target. Starting from
  \CN{} and considering the order $\{a_5, a_3, a_1\}$ will only remove $a_5$ and thus memoize $\{a_1, a_3\}$. However, \MFN{} only memoizes $\{a_5\}$, the unique
  MFVS of $\glush(r_1)$. Since memoizing $a_5$ alone breaks every cycle in
  $G(r_1)$, IDA is eliminated with a single memoized state. Thus, \I{}
 produces a strictly larger set than \MFN{} in this example. As mentioned earlier, \I{} is sensitive to the order of state consideration. Had $\{a_5\}$ not been considered first, \I{} would result in the same set as \MFN{}.

\end{example}

\begin{example}
  \label{ex:non-ida}
  
  For $r = ((ab^? \Or cd^?)ef^?)^*$, \MFN{} memoizes one state (the state for
  $e$ for Glushkov and the state corresponding to the closure operator for Thompson), but no memoization is
  required at all. This is the case since the IDA condition requires two states $p\neq q$
  with a common loop string $v$, which cannot occur here as all paths use
  distinct symbols.
\end{example}

We distinguish between two notions of optimality:
(i) A memoization set $M$ is \emph{minimal} if no proper subset of
    $M$ also eliminates IDA; and
(ii) a memoization set $M$ is \emph{minimum} if it has the least cardinality among all sets eliminating IDA.

Every minimum set is minimal, but the converse may fail. \MFN{} produces a
minimum feedback vertex set (in the absence of $\varepsilon$-loops) for the \emph{graph-structural} criterion (no cycle
uncovered), but the true minimum memoization may be smaller if some
cycles do not contribute to IDA. 
For example, $a^*$ has a single cycle, but no
memoization is required since there is no IDA: the regex matches any
string $a^n$ with exactly one accepting run. Thus, \MFN{} is not always
minimal.
This motivates the following two equivalent conjectures.

\begin{conjecture}
  \label{conj:mfn-contains-min}
  For every regex $R$ and every minimum memoization set $H$ for
  $\thom(R)$ (respectively $\glush(R)$), there exists a minimum memoization set
  $H'$ with $|H'|=|H|$ and $H'\subseteq \mathrm{\MFN{}}(\thom(R))$
  (respectively $H'\subseteq \mathrm{\MFN{}}(\glush(R))$).
\end{conjecture}

Conjecture~\ref{conj:mfn-contains-min} is supported by an exchange
argument: given any minimum memoization set $H$ containing a state $x
\notin \mathrm{\MFN{}}$, one can attempt to swap $x$ for the \MFN{} state that
covers the same cycles as $x$ (which exists due to the minimality of the
MFVS). 

\begin{definition}\label{def:IAR+}
The set of states to memoize in \IR+{} is computed as follows.
(i) Compute $\mathrm{\MFN{}}(A)$; and
(ii) place an order on the states from (i), and in order, remove the state under consideration if, by removing them, we will not re-introduce IDA.
\end{definition}

\begin{conjecture}
  \label{conj:mfn-refine}
  For some order on the states, \IR+{} produces a minimum memoization set.
\end{conjecture}

\begin{observation}
We note that Conjecture~\ref{conj:mfn-contains-min} implies Conjecture~\ref{conj:mfn-refine}. A required order is obtained by ordering the states in $\MFN{}$ by having last a sequence of states, from a set of minimum cardinality that removes IDA.
But it also follows directly from Definition~\ref{def:IAR+} that Conjecture~\ref{conj:mfn-refine} implies Conjecture~\ref{conj:mfn-contains-min}.
\end{observation}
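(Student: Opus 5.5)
The plan is to prove the two implications separately. Both rest on one monotonicity fact, which I would establish first: \emph{if a memoization set $M$ eliminates IDA and $M \subseteq M'$, then $M'$ also eliminates IDA}. This follows from Theorem~\ref{def:ida}. IDA requires states $p \neq q$, a string $v\in\Sigma^+$, and the three paths on $v$, with the cycle on $q$ containing no memoized state. If no such witness exists for $M$, then none exists for $M'$: the paths do not depend on memoization, and a cycle that already contains a state of $M$ also contains a state of $M'$. I would also recall from Section~\ref{sec:mfn} that $\MFN{}$ itself eliminates IDA, since every non-$\eps$ cycle contains an $\MFN{}$ state. Hence the starting point of \IR+{} is IDA-free, and each step of \IR+{} maintains the invariant that the current set eliminates IDA.

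For Conjecture~\ref{conj:mfn-contains-min} $\Rightarrow$ Conjecture~\ref{conj:mfn-refine}, fix a minimum memoization set $H$. The conjecture gives a minimum set $H' \subseteq \MFN{}$ with $|H'| = |H|$. Order the states of $\MFN{}$ so that all states of $\MFN{}\setminus H'$ come first, in any order, followed by the states of $H'$.
\begin{itemize}
\item \emph{States of $\MFN{}\setminus H'$.} At each step the candidate set (current set minus the state under consideration) still contains $H'$. By monotonicity it eliminates IDA, so the state is removed. After this phase the current set is exactly $H'$.
\item \emph{States of $H'$.} Removing any one of them would yield an IDA-eliminating set of size $|H'|-1$. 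This contradicts the minimum cardinality of $H'$, so no state of $H'$ is removed.
\end{itemize}
Hence \IR+{} outputs $H'$, which is minimum.

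For the converse, Conjecture~\ref{conj:mfn-refine} $\Rightarrow$ Conjecture~\ref{conj:mfn-contains-min}, take the order promised by Conjecture~\ref{conj:mfn-refine} and let $H'$ be the resulting output. By Definition~\ref{def:IAR+}, \IR+{} only deletes states from $\MFN{}$, so $H' \subseteq \MFN{}$. By assumption $H'$ is minimum. Any given minimum set $H$ has the same (least) cardinality, so $|H'| = |H|$, which is what Conjecture~\ref{conj:mfn-contains-min} asks for.

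The step I expect to need the most care is pinning down the semantics of ``removing a state does not re-introduce IDA'' in Definition~\ref{def:IAR+}. The check must be made relative to the \emph{current} set, not relative to $\MFN{}$, since otherwise the monotonicity argument in the first phase does not apply as stated. The same statement should be made separately for Thompson and Glushkov automata, with $\eps$-cycles handled as in Observation~\ref{obs:no-repeat-transition} so that $\MFN{}$ genuinely eliminates IDA. I would also note that the converse direction is essentially immediate, so the content lies in the first implication.
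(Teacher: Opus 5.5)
Your proposal is correct and follows the same route as the paper. You order the states of $\MFN{}$ so that those of a minimum set $H'\subseteq\MFN{}$ come last: every earlier state is then discarded, since the current set still contains the IDA-eliminating $H'$, and no state of $H'$ is discarded, by minimum cardinality. The converse is immediate because the procedure of Definition~\ref{def:IAR+} only deletes states from $\MFN{}$.

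Beyond the paper's one-line justification, you also make explicit two points it leaves implicit: that IDA-elimination is preserved under supersets (via Theorem~\ref{def:ida}), and that each removal check is made against the current set.
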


\begin{example}
  \label{ex:alternation-closure}
  In a more specific variant of Example~\ref{ex:non-ida}, for $r = ((a\Or b \Or c \Or d) e)^*$ with Glushkov, \CN{} memoizes the four
  alternation states $\{a,b,c,d\}$ and $e$, while \MFN{} memoizes
  only $\{e\}$, since $e$ alone breaks every cycle. This example illustrates that \MFN{} is effective when having a closure over a large
  alternation concatenated with a non-nullable subexpression.
\end{example}

\begin{example}
  \label{ex:nested-closures}
  For $r = (((a^+)^+)^+)$ with Thompson, \CN{} memoizes all three closure
  nodes, while \MFN{} memoizes only the innermost closure node, whose
  removal disconnects all back edges (by the domination property). 
\end{example}

\section{Comparing Thompson and Glushkov under MFN}
\label{sec:thompson-vs-glushkov}

\begin{theorem}
  \label{thm:thompson-leq-glushkov}
  For any regex $R$,
  $|\mathrm{\MFN{}}(\thom(R))| \leq |\mathrm{\MFN{}}(\glush(R))|$.
\end{theorem}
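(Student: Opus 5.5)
We prove the theorem by structural induction on the parse tree of $R$, comparing the two algorithms line by line. Write $T(E)$ for the triple $(c,E^{\thom}_{\mathrm{req}},n)$ returned by Algorithm~\ref{alg:thompson-pta} and $G(E)$ for the triple $(E_{\mathrm{op}},E^{\glush}_{\mathrm{req}},n)$ returned by Algorithm~\ref{alg:glushkov-pta}. Both algorithms compute nullability identically, so the $n$ components agree. The induction maintains a strengthened invariant coupling the two auxiliary quantities:
\begin{enumerate}
\item $|E^{\thom}_{\mathrm{req}}| \le |E^{\glush}_{\mathrm{req}}|$;
\item $|E^{\thom}_{\mathrm{req}}| + [\lnot c] \le |E^{\glush}_{\mathrm{req}}| + |E_{\mathrm{op}}|$, where $[\lnot c]$ is $1$ if $E$ is not covered and $0$ otherwise;
\item $E_{\mathrm{op}}=\emptyset$ implies $c=\mathit{True}$.
\end{enumerate}
Item~(2) is what the closure case needs. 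Applying $^*$ or $^+$ to $A$ makes the Thompson count $|A^{\thom}_{\mathrm{req}}|+[\lnot c_A]$ and the Glushkov count $|A^{\glush}_{\mathrm{req}}|+|A_{\mathrm{op}}|$. Item~(1) for $A^*$ therefore follows directly from item~(2) for $A$. After the closure, Thompson returns $c=\mathit{True}$ and Glushkov returns $E_{\mathrm{op}}=\emptyset$, so items~(2) and~(3) reduce to item~(1).

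\textbf{Base and easy cases.} For $\eps$, Thompson returns $(\mathit{True},\emptyset)$ and Glushkov returns $(\emptyset,\emptyset)$. For a symbol $a$, Thompson returns $(\mathit{False},\emptyset)$ and Glushkov returns $(\{a_1\},\emptyset)$, giving $0+1\le 0+1$. The case $A^?$ passes everything through unchanged.

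\textbf{Alternation $A\Or A'$.} The req sets are disjoint unions in both constructions, so item~(1) adds up. For item~(2), use $[\lnot(c\land c')]\le[\lnot c]+[\lnot c']$ together with $|o\cup o'|=|o|+|o'|$, which holds because the state sets of the subautomata are disjoint. For item~(3), $o\cup o'=\emptyset$ forces both $o$ and $o'$ to be empty, hence both $c$ and $c'$ are true.

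\textbf{Concatenation $A\cdot A'$.} This is the case where the shapes of the two algorithms must be matched. Item~(1) is again additive. For item~(2), treat the nullability branches in turn.
\begin{itemize}
\item Both nullable: this is the same computation as alternation.
\item Only $A'$ nullable: Thompson keeps $c$ and Glushkov keeps $o$, so we sum inequality~(2) for $A$ with inequality~(1) for $A'$.
\item Only $A$ nullable: symmetric to the previous branch.
\item Neither nullable: Thompson returns $c\lor c'$ and Glushkov returns the smaller of $o$ and $o'$. Say Glushkov picks $o$. We need $[\lnot(c\lor c')]\le[\lnot c]$, and combining inequality~(2) for $A$ with inequality~(1) for $A'$ gives the bound.
\end{itemize}
For item~(3), an empty returned op set comes from some $o$ or $o'$ that is empty. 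In each branch, the inductive item~(3) gives the corresponding coverage flag, which then propagates to the returned $c$. In the first three branches the coverage term shows the right monotone behaviour. In the last branch, $c\lor c'$ is true whenever the chosen set is empty.

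The main obstacle is checking that the invariant is strong enough, and in particular that the covered/op pairing in item~(3) holds in the non-nullable concatenation branch. If item~(3) failed there, the closure step would break. The fallback would be to strengthen item~(3) to ``$|E_{\mathrm{op}}|\ge[\lnot c]$'', which is exactly what is needed.

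Finally, specialise item~(1) to $E=R$. By Theorems~\ref{thm:thompson-mfvs} and~\ref{thm:glushkov-mfvs}, $R^{\thom}_{\mathrm{req}}$ and $R^{\glush}_{\mathrm{req}}$ are the \MFN{} sets of $\thom(R)$ and $\glush(R)$ respectively, which proves the theorem.
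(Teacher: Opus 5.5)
Your proof is correct, but it takes a more formal route than the paper. The paper gives only a per-closure counting sketch: Thompson's PTA adds at most one state per closure operator (one for each ``uncovered'' closure), while Glushkov's PTA must add at least one state per uncovered closure. You instead run a simultaneous induction over the two algorithms with a coupling invariant.

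Your item~(3) is exactly the fact the sketch leaves implicit: whenever Thompson adds $\entry(A)$ at a closure ($c_A$ false), Glushkov's $A_{\mathrm{op}}$ is nonempty. Items~(1)--(2) are amortised bookkeeping that let the induction close without decomposing the req sets per closure. Item~(3) alone would in fact suffice. $R^{\glush}_{\mathrm{req}}$ is the disjoint union of the sets $B_{\mathrm{op}}$ over closure nodes $B^*$/$B^+$ of $R$, whereas $|R^{\thom}_{\mathrm{req}}|$ is at most the number of those closure nodes with $c_B$ false.

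Your version has the advantage of depending only on the code of the two algorithms, not on their semantic correctness. The paper's is shorter and explains structurally why Glushkov can need more than one state per closure.

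One step needs a line of justification. At a closure, Glushkov returns $A_{\mathrm{op}}\cup A_{\mathrm{req}}$, and you count it as $|A_{\mathrm{req}}|+|A_{\mathrm{op}}|$; this needs $A_{\mathrm{op}}\cap A_{\mathrm{req}}=\emptyset$. It holds because every closure resets the op set to $\emptyset$. Hence $A_{\mathrm{op}}$ contains only positions not under any closure inside $A$, while $A_{\mathrm{req}}$ contains only positions under some closure inside $A$ (a quick induction). On the Thompson side you only need the upper bounds $|A_{\mathrm{req}}\cup\{\entry(A)\}|\le|A_{\mathrm{req}}|+1$ and $|r\cup r'|\le|r|+|r'|$, so no disjointness is required there.

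Finally, your ``fallback'' $|E_{\mathrm{op}}|\ge[\lnot c]$ is just a restatement of item~(3), which your case analysis already establishes.
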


\begin{proof}[Proof sketch]
  For Thompson, each closure operator $A^*$ or $A^+$ contributes exactly
  one new back-edge
  in $\thom(R)$. This back-edge is incident to a closure state which
  is selected to cover the cycles in $\thom(R)$ unless the subgraph of $A$ is already covered. The Thompson PTA therefore selects at most one state per closure operator, equaling the number of uncovered closures.

  For Glushkov, the closure operators $A^*$ and $A^+$ do not contribute a dedicated state;
  instead, the back edges connect states within the subgraph of $A$, reachable from the initial state of $A$. The Glushkov PTA must therefore add at least one state
  \emph{per uncovered closure}, but may add more if the subgraph has
  multiple ``entry points'' 
  that must all
  be covered. 
\end{proof}

\begin{corollary}
  The same bound holds for \CN{}: $|\mathrm{\CN{}}(\thom(R))| \leq
  |\mathrm{\CN{}}(\glush(R))|$ for any regex $R$.
\end{corollary}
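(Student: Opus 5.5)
The plan is to compare the two sides through back-edge targets. On the Thompson side, reducibility~\cite{giammarresi2004thompson} pins $\CN{}$ down exactly. On the Glushkov side, I will build an explicit DFS for each closure operator that witnesses a $\CN{}$ state. Throughout I use the convention of the proof of Theorem~\ref{thm:thompson-mfvs}: $\eps$-cycles are ignored, so closures $A^*$ or $A^+$ with $\mathcal{L}(A)\subseteq\{\eps\}$ contribute nothing on either side.

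\textbf{Step 1 (Thompson side).} Since $\thom(R)$ is reducible, every DFS has the same set of back edges. Each of these is the $\eps$-edge added by one closure operator $A^*$ or $A^+$, and its target is $\entry(A)$. Hence $\CN{}(\thom(R))=\{\entry(A) : A^* \text{ or } A^+ \text{ occurs in } R\}$. Consequently $|\CN{}(\thom(R))|$ is at most the number $k$ of (non-trivial) closure operators in $R$.

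\textbf{Step 2 (Glushkov side, one witness per closure).} Fix a closure $E=A^*$ or $A^+$ in $R$, and let $\mathrm{First}(A)$ and $\mathrm{Last}(A)$ be the usual position sets. Every transition of $\glush(R)$ that enters a position of $A$ from outside $A$ lands in $\mathrm{First}(A)$. Moreover, for every $p\in\mathrm{Last}(A)$ and $f\in\mathrm{First}(A)$ the closure adds the transition $p\to f$. I would run a DFS from $q_0$ that enters the positions of $A$ for the first time at some chosen $f\in\mathrm{First}(A)$; this is possible because every state is useful. From $f$ the DFS explores inside $A$ and reaches some $p\in\mathrm{Last}(A)$ while $f$ is still on the stack. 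The edge $p\to f$ is then a back edge, so $f\in\CN{}(\glush(R))$. Because the starting point may be \emph{any} $f\in\mathrm{First}(A)$, we obtain the full inclusion $\mathrm{First}(A)\subseteq\CN{}(\glush(R))$ for every closed subexpression $A$.

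\textbf{Step 3 (injectivity — the main obstacle).} It remains to choose distinct witnesses $f_E\in\mathrm{First}(A_E)$ for distinct closures $E$. I would treat this as a matching problem and verify Hall's condition. The position sets of subexpressions form a laminar family, so the sets $\mathrm{First}(A_E)$ inherit a tree structure along the parse tree. Hall's condition then reduces to the following claim: for a closure $E$ with $j$ closures (including $E$ itself) in its subtree, those closures together have at least $j$ distinct first positions, or else they share a single Thompson entry. The degenerate situation is directly nested closures with identical first sets, such as $(a^+)^+$. There the Thompson back edges all lead into the innermost closure's entry; in line with Example~\ref{ex:a-plus-a-plus}, I would count such a nested chain as a single Thompson closure target. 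Equivalently, I would take the Thompson variant promised after Figure~\ref{fig:thompson-example}, where a closure applied directly to a closure reuses its entry, so that Step 1 counts only distinct entries.

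With this convention, the matching is done greedily bottom-up along the parse tree. Each closure whose first set is not identical to that of a directly nested closure gets a new first position. If $A$ is a concatenation whose first set is inherited from a nullable-free left factor, I would check that the left factor's own closures already use up only positions that are also first positions of $A$, and then argue that a fresh one remains. This Hall-condition check, together with fixing the right Thompson convention, is the step I expect to require the most care.
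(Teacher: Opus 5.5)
Your Steps 1 and 2 are fine; the white-path argument for $\mathrm{First}(A)\subseteq\CN{}(\glush(R))$ is correct. Step 3, however, cannot be completed. With the paper's Thompson construction the inequality itself is false, so the injection you are looking for does not exist.

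The paper's own examples show this. Example~\ref{ex:a-plus-a-plus} says $\thom((a^+)^+)$ has \emph{two} closure nodes; it is only $\MFN{}$ that drops to one. Example~\ref{ex:nested-closures} says $\CN{}$ memoizes all three closure nodes of $((a^+)^+)^+$. In both cases the Glushkov automaton is $q_0$ plus a single looping state $a_1$, so $|\CN{}(\glush)|=1$. Your appeal to Example~\ref{ex:a-plus-a-plus} to count a nested chain as one Thompson target therefore reads it backwards. Switching to a construction in which a closure applied to a closure reuses its entry proves a different statement: the remark after Figure~\ref{fig:thompson-example} only fixes that $\entry(A)$ is the same for $A^*$ and $A^+$. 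Similarly, $\CN{}$ is defined by DFS back edges and does not discard $\eps$-cycles; without your convention, $\eps^*$ already gives $1>0$.

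Even granting your convention, the Hall-type claim in Step 3 fails.
In $(a^+\Or b^+)^+$ the outer closure is applied to a union, whose Thompson automaton has its own start state. So $\entry(a)$, $\entry(b)$ and $\entry(a^+\Or b^+)$ are three distinct back-edge targets. Meanwhile $\glush$ has only the two positions $a_1,b_2$, so $|\CN{}(\glush)|\le 2$.
In $(a^+b)^+$, with closures getting a fresh start state as in Figure~\ref{fig:thompson-example}, $\entry(a^+b)$ is the start of $\thom(a^+)$, which is distinct from $\entry(a)$. Yet $\mathrm{First}(a)=\mathrm{First}(a^+b)=\{a_1\}$ and $\CN{}(\glush)=\{a_1\}$: the only successor of $q_0$ is $a_1$, so $a_1\to b_2$ is always a tree edge. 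This is exactly your non-nullable-left-factor case, and no fresh position remains.

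The root problem is that Step 2 only yields $|\CN{}(\glush(R))|\ge|\bigcup_E\mathrm{First}(A_E)|$. Distinct closures can share first positions, or outnumber the positions altogether, while Thompson gives each closure its own entry.

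These examples do not contradict the $\MFN{}$ bound of Theorem~\ref{thm:thompson-leq-glushkov}: both sides are $1$, $1$, $2$ for $(a^+)^+$, $(a^+b)^+$, $(a^+\Or b^+)^+$. That is because the $\mathit{isCovered}$ test skips closures whose cycles are already broken. $\CN{}$ has no such filter, so the corollary does not follow from that theorem, and the paper gives no separate proof for it.
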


\begin{example}
  For $(a \Or a)^*$: $\thom((a \Or a)^*)$ has one closure state (the $*$-node),
  giving $|\mathrm{\MFN{}}| = |\mathrm{\CN{}}| = 1$. For $\glush((a \Or a)^*)$, both
  states $a_1$ and $a_2$ are reachable via the back edges, giving
  $|\mathrm{\CN{}}| = 2$. By Algorithm~\ref{alg:glushkov-pta}, the Glushkov
  PTA would select $\{a_1\}$,
  giving $|\mathrm{\MFN{}}| = 1$ in this case, matching Thompson.
\end{example}

\begin{example}
  For $((aa^? \Or aa^?)aa^?)^*$: As in Example~\ref{ex:iar-not-minimal}, the
  Glushkov \MFN{} is $\{a_5\}$, while \CN{} gives $\{a_1,a_3,a_5\}$. For the
  Thompson NFA, \MFN{} and \CN{} both give a single closure state. This example
 illustrates that, for Glushkov, the difference between \MFN{} and \CN{} can be
  significant even when the Thompson NFA does not make such a distinction.
\end{example}

\section{Experimental Results}
\label{sec:experiments}

Theoretical analysis shows that \MFN{} memoization, computed by the PTA algorithms, results in a smaller set than competing memoization schemes (aside from \I{}) and determines these nodes in equal or improved time complexity. Our experiments follow a procedure similar to that of Roodt et al.~\cite{roodt2024benchmarking}. Here we consider the Polyglot corpus of regular expressions collected for~\cite{davis2019aren} that contains 537\,806 regular expressions from online repositories. The PTA algorithms have been implemented in BRU (Brendan's Regex Utility)\footnote{https://github.com/bkmwatling/srvm}, which will be the matcher used to conduct these experiments. Due to some syntax limitations in BRU, experiments were conducted on the 440\,079 successfully parsed expressions. For each regex, 20 strings were generated using Xeger\footnote{https://pypi.org/project/xeger/}: 10 positive (accepting) and 10 negative (rejecting) by mutation. Full and partial matching was performed for all inputs. Table~\ref{tab:strings} indicates the mean, median, and standard deviation of the size of the strings sampled for use during the experiments.

Figures~\ref{fig:memosize} and \ref{fig:state} illustrate the comparison in the number of memoized entries (or matching steps) used. For each cell $(x, y)$, $x$ is the number of memoized entries (or matching steps) used by \MFN{} and $y$ is the number of memoized entries (or matching steps) used by either \IN{} or \CN{} memoization. Cell luminosity is proportional to frequency, with lighter shades indicating a higher density of occurrences for a given pair. We use IQR (interquartile range) outlier removal for these graphs to focus attention on the most common scenarios without skewing the figures.

The main motivation for \MFN{} is the reduction of memoized nodes, which reduces the memory usage of the matcher. In every case, \MFN{} used the same number of or fewer memoization entries compared to \IN{} and \CN{}. This is illustrated in Figure~\ref{fig:memosize}, where each experiment lies above the diagonal line. 
Table~\ref{tab:size} indicates the mean, median, and standard deviation of the number of memoization entries for each configuration. A median of $0$ occurs for many of the regex matching types. This is because a large number of regexes in the Polyglot corpus do not contain branching or closure operators in their corresponding state machine representations. Thus, many of the regexes do not require any memoization, significantly reducing the median number of memoization entries used.

\begin{figure*}[htbp]
    \centering
    \hfill
    \begin{subfigure}[b]{0.24\textwidth}
        \centering
        \includegraphics[width=\textwidth]{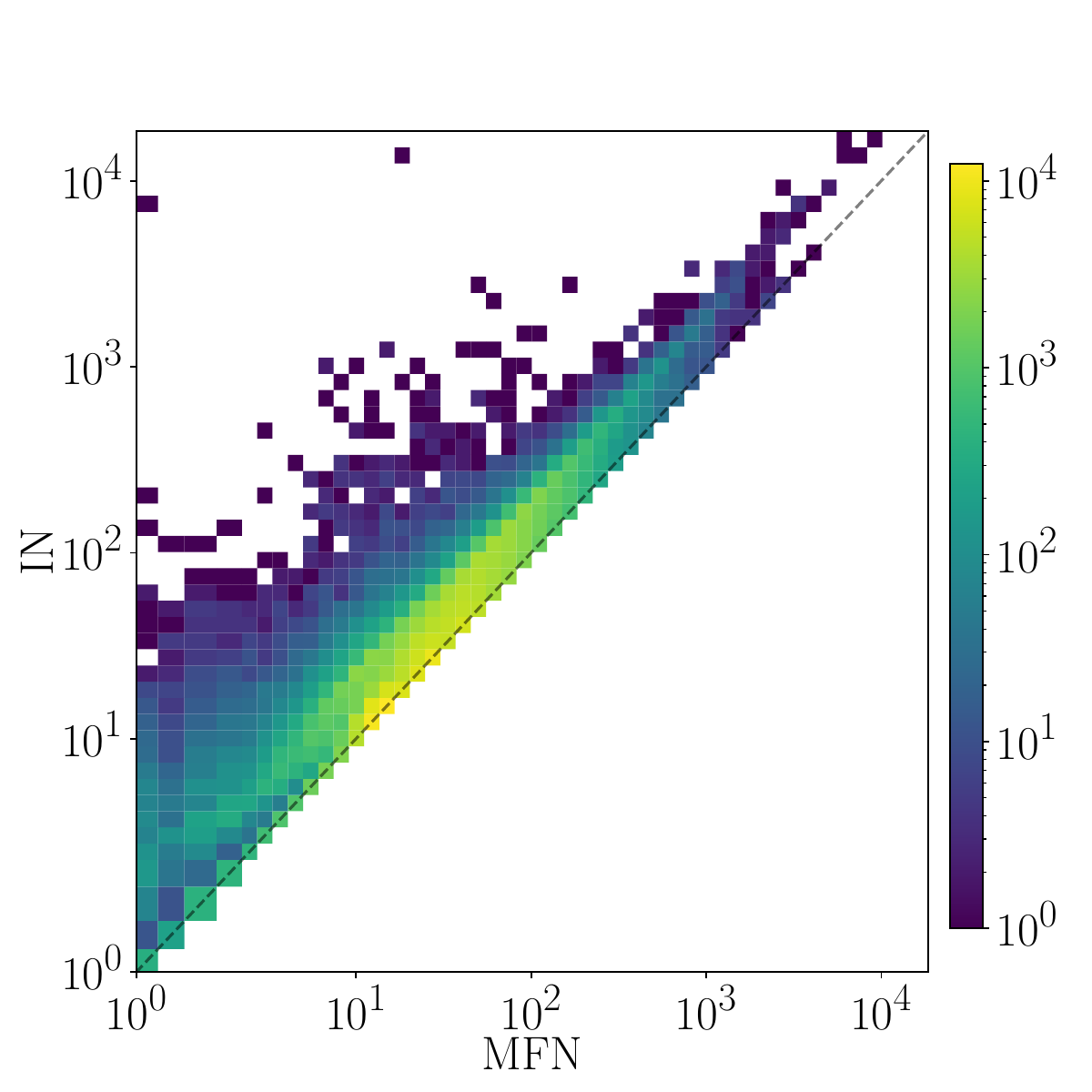}
        \caption{Thompson\\\hspace*{0.55cm} \MFN{} vs \IN{}}
        \label{fig:thompson-mfn-in-size}
    \end{subfigure}
    \hfill
    \begin{subfigure}[b]{0.24\textwidth}
        \centering
        \includegraphics[width=\textwidth]{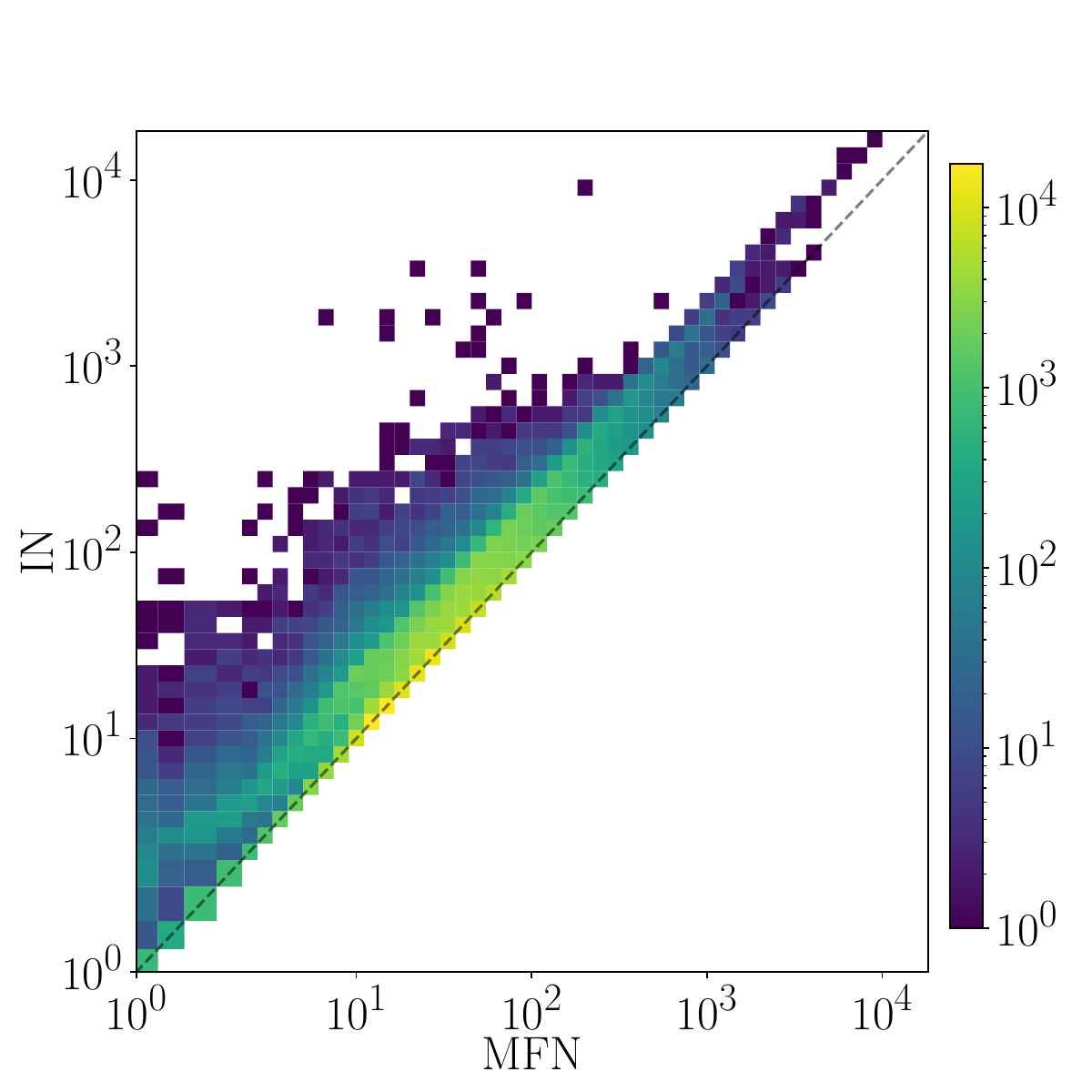}
        \caption{Glushkov\\\hspace*{0.55cm} \MFN{} vs \IN{}}
        \label{fig:glushkov-mfn-in-size}
    \end{subfigure}
    \hfill
    \begin{subfigure}[b]{0.24\textwidth}
        \centering
        \includegraphics[width=\textwidth]{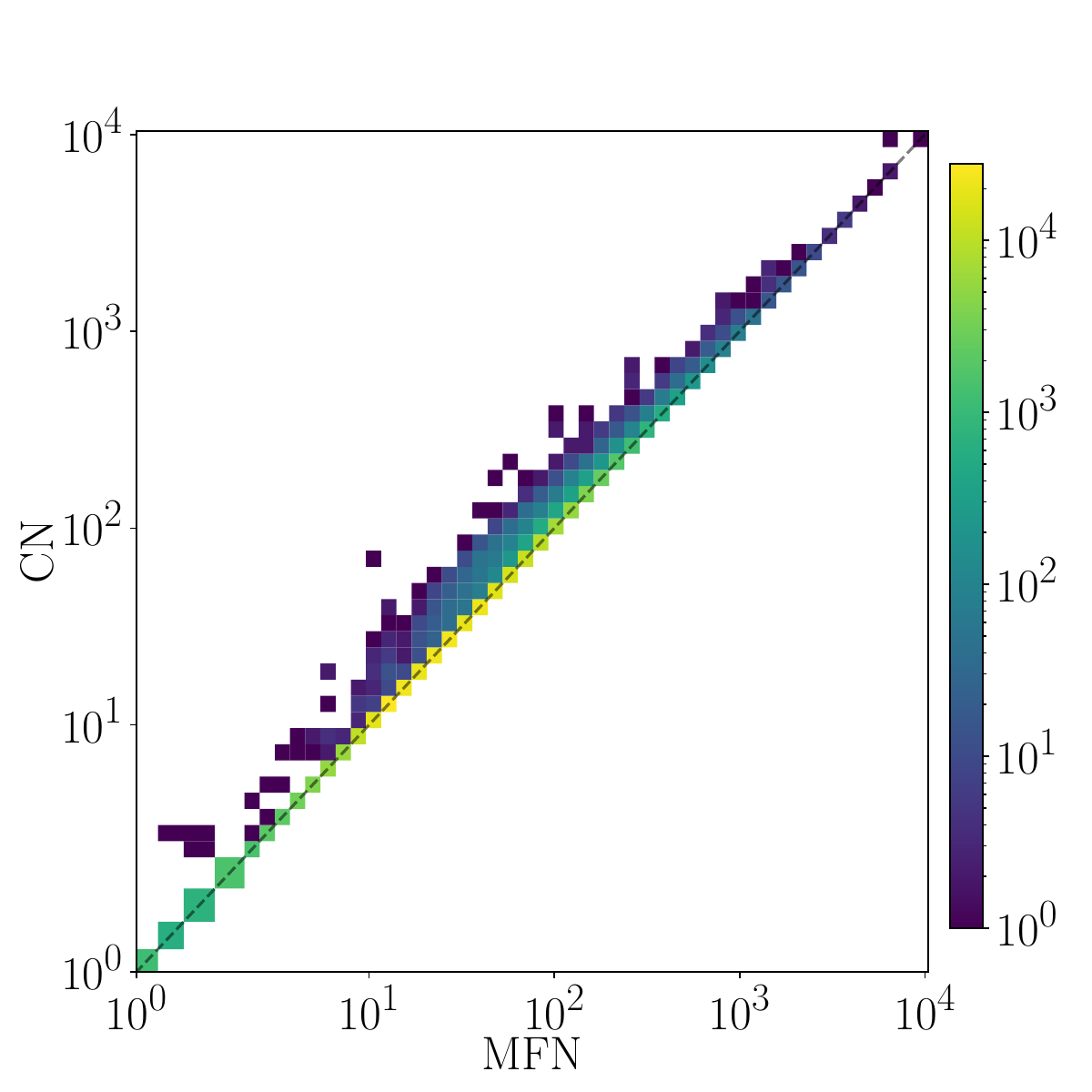}
        \caption{Thompson\\\hspace*{0.55cm} \MFN{} vs \CN{}}
        \label{fig:thompson-mfn-cn-size}
    \end{subfigure}
    \hfill
    \begin{subfigure}[b]{0.24\textwidth}
        \centering
        \includegraphics[width=\textwidth]{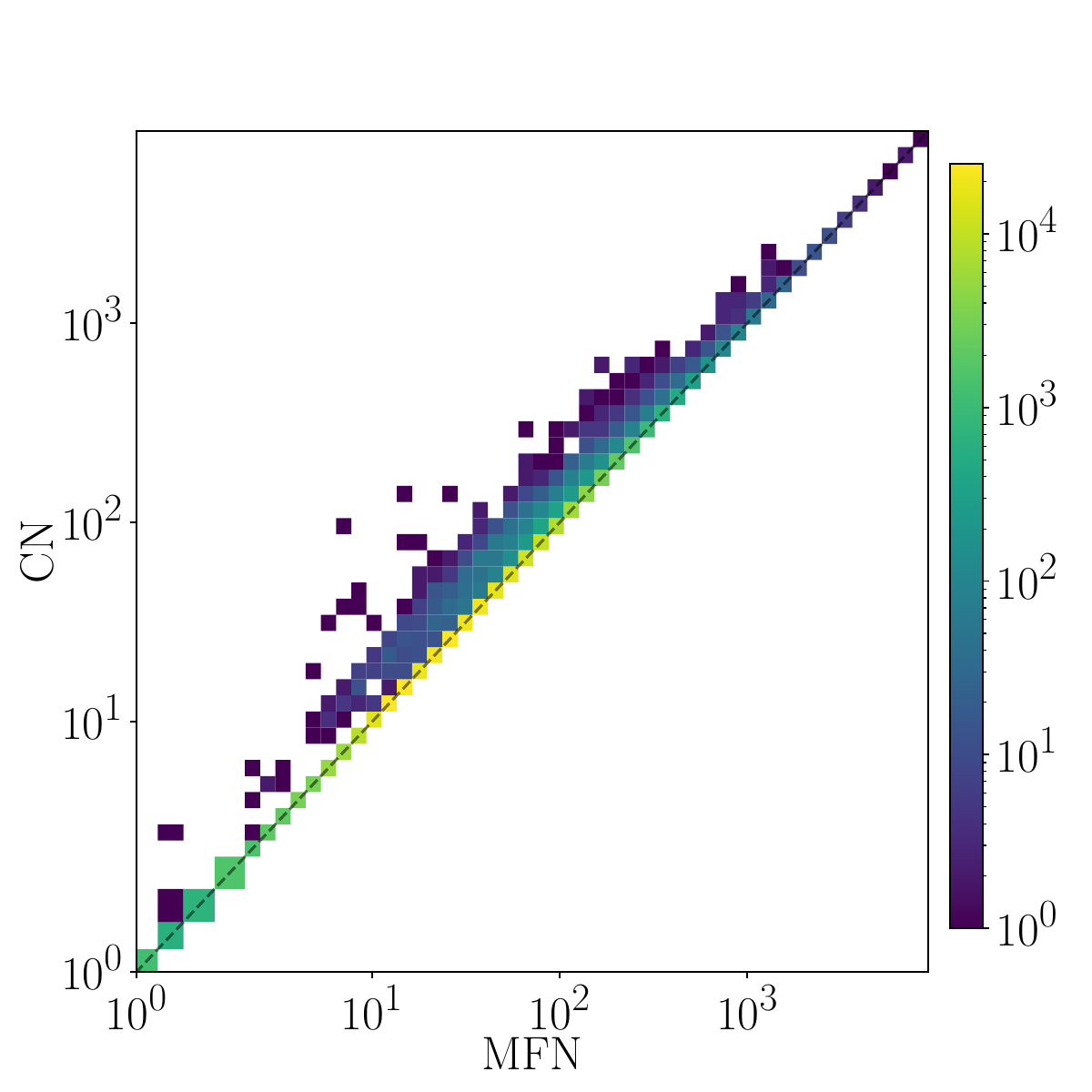}
        \caption{Glushkov\\\hspace*{0.55cm} \MFN{} vs \CN{}}
        \label{fig:glushkov-mfn-cn-size}
    \end{subfigure}
    \hfill\null
    \caption{Comparison of \MFN{} to \IN{} and \CN{} in terms of memoization entries for Thompson and Glushkov constructions. Entries along the diagonal indicate that both schemes give the same result, while the further an entry is into the upper triangle the larger the improvement offered by \MFN{}.}
    \label{fig:memosize}
\end{figure*}

Although the memory improvements of \MFN{} are the main focus of these experiments, matching time is also relevant. \MFN{} removes IDA, but memoizing more states may still improve constant factors in the matching time. A very memory efficient memoization scheme that significantly hinders the matching time would not be very useful in practice.

\MFN{} memoization rarely resulted in a larger number of matching steps compared to \IN{} and \CN{}. Figure~\ref{fig:state} illustrates that \MFN{} memoization was more similar to \IN{} and \CN{} in terms of matching steps than memoization entries. Figure~\ref{fig:state} also shows that there was a case where \MFN{} memoization used an order of magnitude more matching steps compared to \IN{}. However, this was offset by a case where \MFN{} used four orders of magnitude fewer memoization entries compared to \IN{}.

Table~\ref{tab:state} provides the mean, median, and standard deviation of the number of matching steps for each configuration. It supports the fact that \MFN{} incurs only a marginal increase in the average matching steps.

\begin{figure*}[htbp]
    \centering
    \hfill
    \begin{subfigure}[b]{0.24\textwidth}
        \centering
        \includegraphics[width=\textwidth]{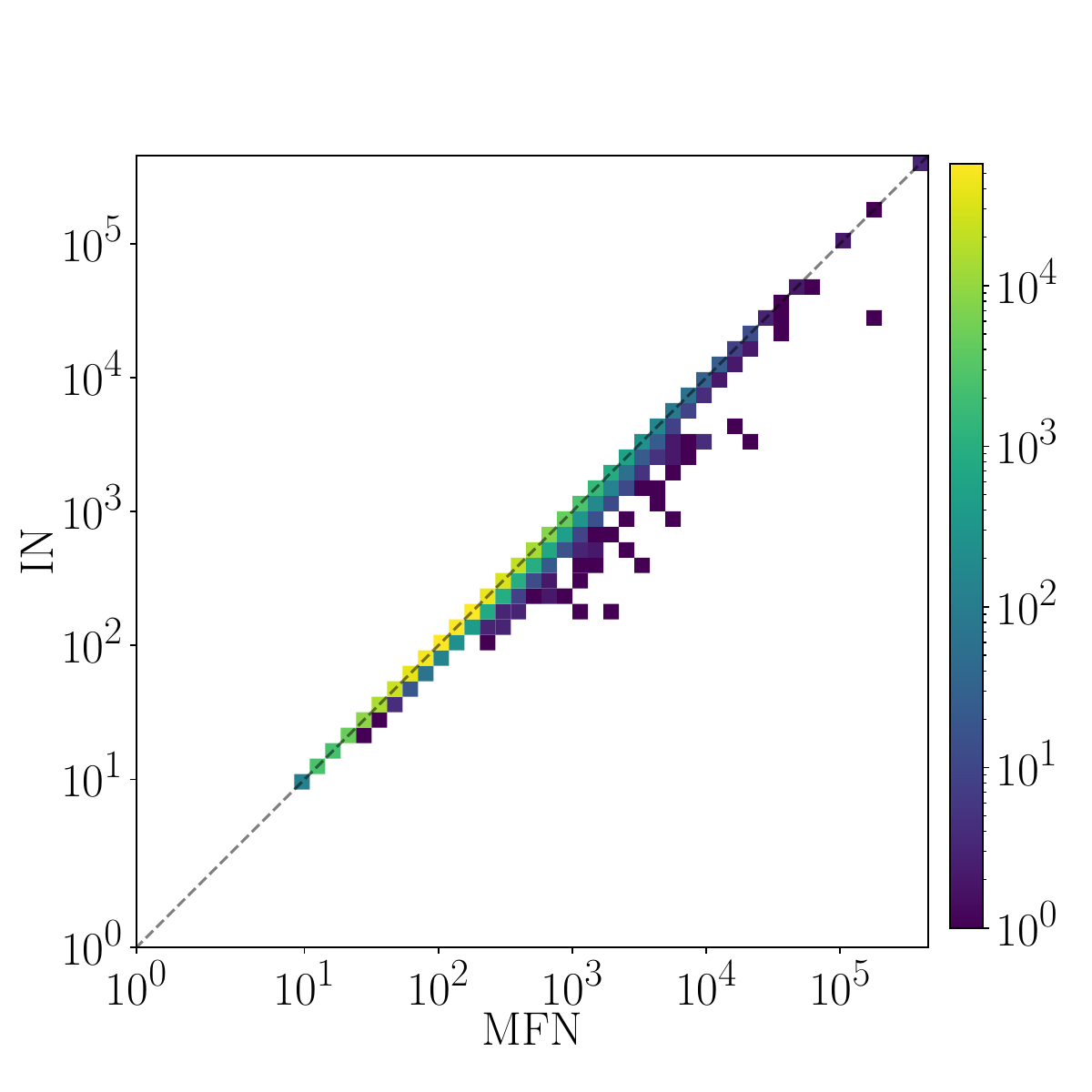}
        \caption{Thompson\\\hspace*{0.55cm} \MFN{} vs \IN{}}
        \label{fig:thompson-mfn-in-size}
    \end{subfigure}
    \hfill
    \begin{subfigure}[b]{0.24\textwidth}
        \centering
        \includegraphics[width=\textwidth]{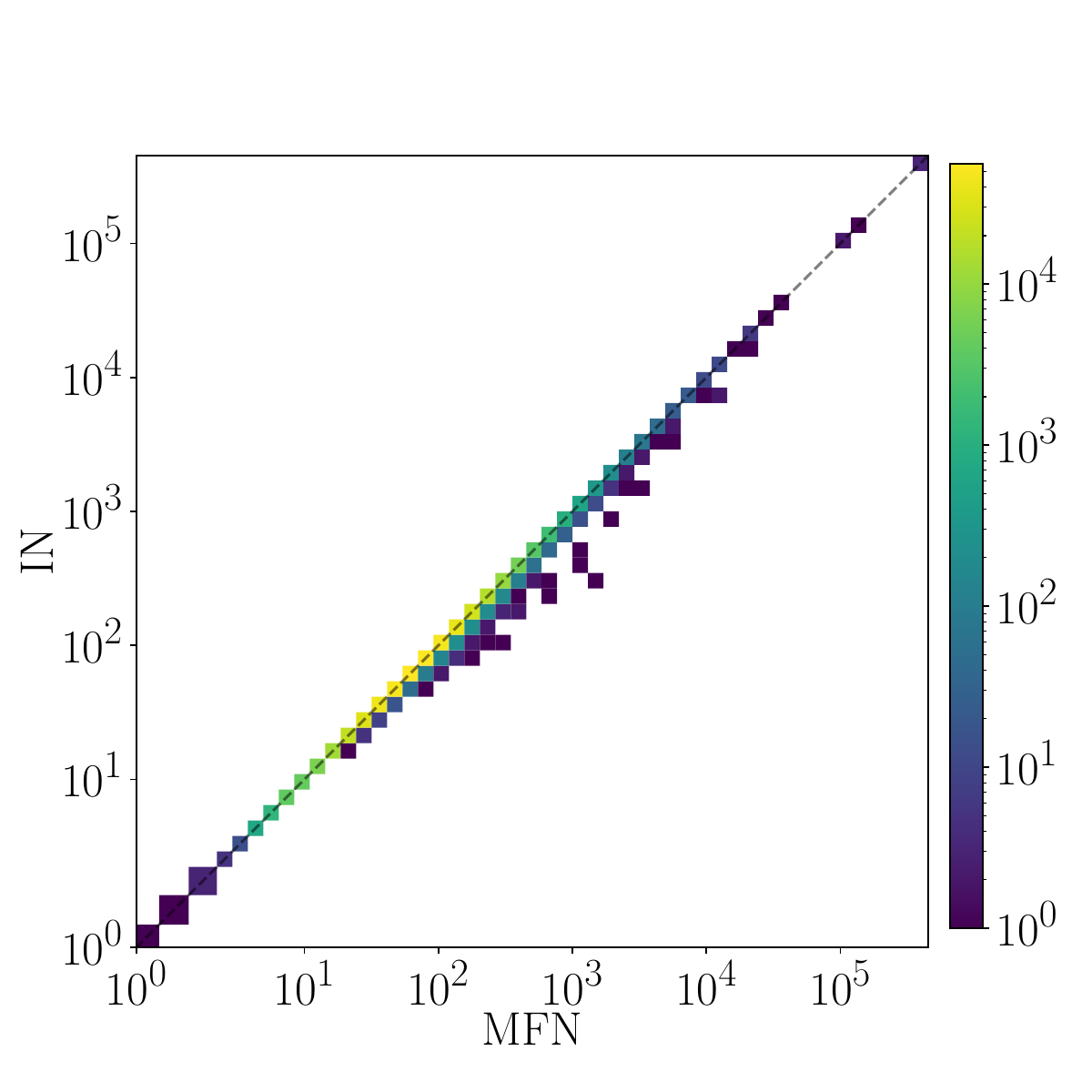}
        \caption{Glushkov\\\hspace*{0.55cm} \MFN{} vs \IN{}}
        \label{fig:glushkov-mfn-in-size}
    \end{subfigure}
    \hfill
    \begin{subfigure}[b]{0.24\textwidth}
        \centering
        \includegraphics[width=\textwidth]{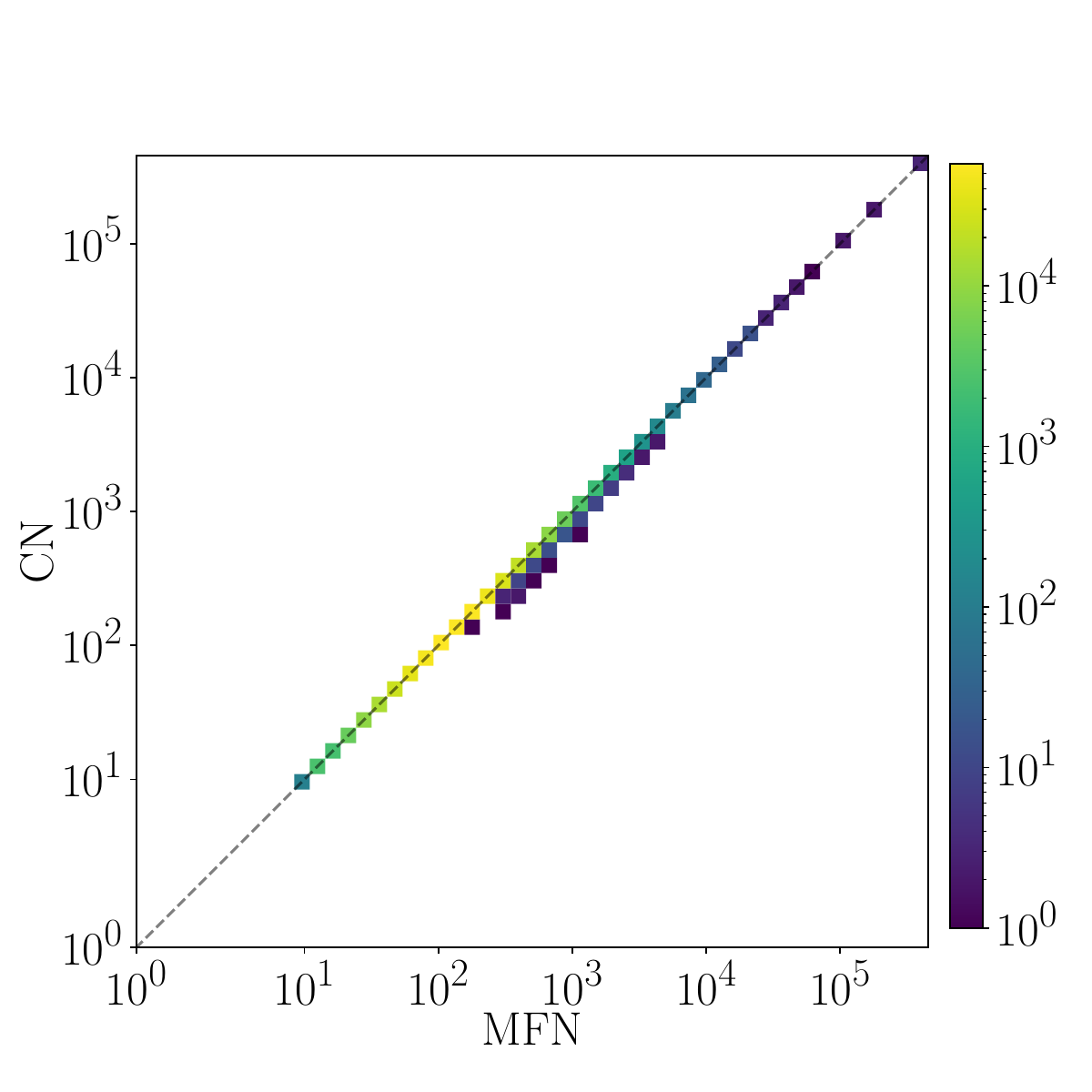}
        \caption{Thompson\\\hspace*{0.55cm} \MFN{} vs \CN{}}
        \label{fig:thompson-mfn-cn-size}
    \end{subfigure}
    \hfill
    \begin{subfigure}[b]{0.24\textwidth}
        \centering
        \includegraphics[width=\textwidth]{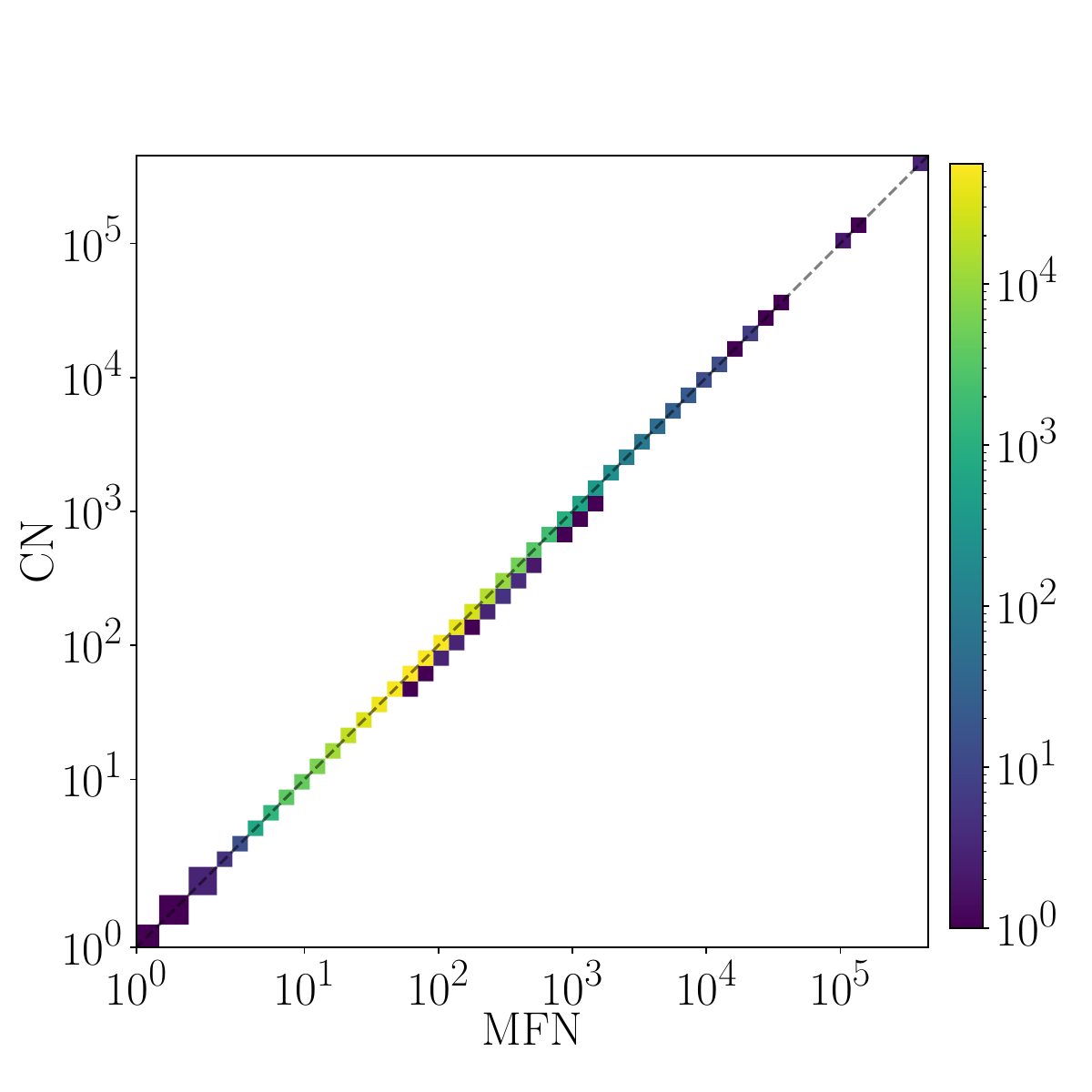}
        \caption{Glushkov\\\hspace*{0.55cm} \MFN{} vs \CN{}}
        \label{fig:glushkov-mfn-cn-size}
    \end{subfigure}
    \hfill\null
    \caption{Comparison of \MFN{} to \IN{} and \CN{} in terms of matching steps for Thompson and Glushkov constructions. Entries along the diagonal correspond to no improvement, as can be seen a scattering of entries fall into the lower triangle, indicating improvements in the number of matching steps performed using \MFN{}.}
    \label{fig:state}
\end{figure*}

Although \MFN{} memoization is not a uniform improvement over all the mentioned memoization schemes, comparison with \CN{} is significant. \MFN{} has a significant memory reduction without a significant increase in matching time. In every case, \MFN{} used the same number of or fewer memoization entries than \CN{}.

Although many cases where \MFN{} reduced the memoization entries resulted in more matching steps, there are many cases where \MFN{} used fewer memoization entries without increasing the matching steps. Conversely, there were no cases where \MFN{} used an equal number of memoization entries and had an increase in the matching steps. %

While MFN resulted in more matching steps than IN and CN in some cases, it still significantly reduced matching steps compared to no memoization. To contrast the difference between MFN and the other memoization schemes, as well as provide insight into the frequency of exponential backtracking in the experiments, Figure~\ref{fig:add-state} illustrates the comparison between no memoization (\None{}) and \MFN{} for both Thompson and Glushkov with respect to the number of matching steps used.

\begin{figure*}[htbp]
    \centering
    \hfill
    \begin{subfigure}[b]{0.49\textwidth}
        \centering
        \includegraphics[width=\textwidth]{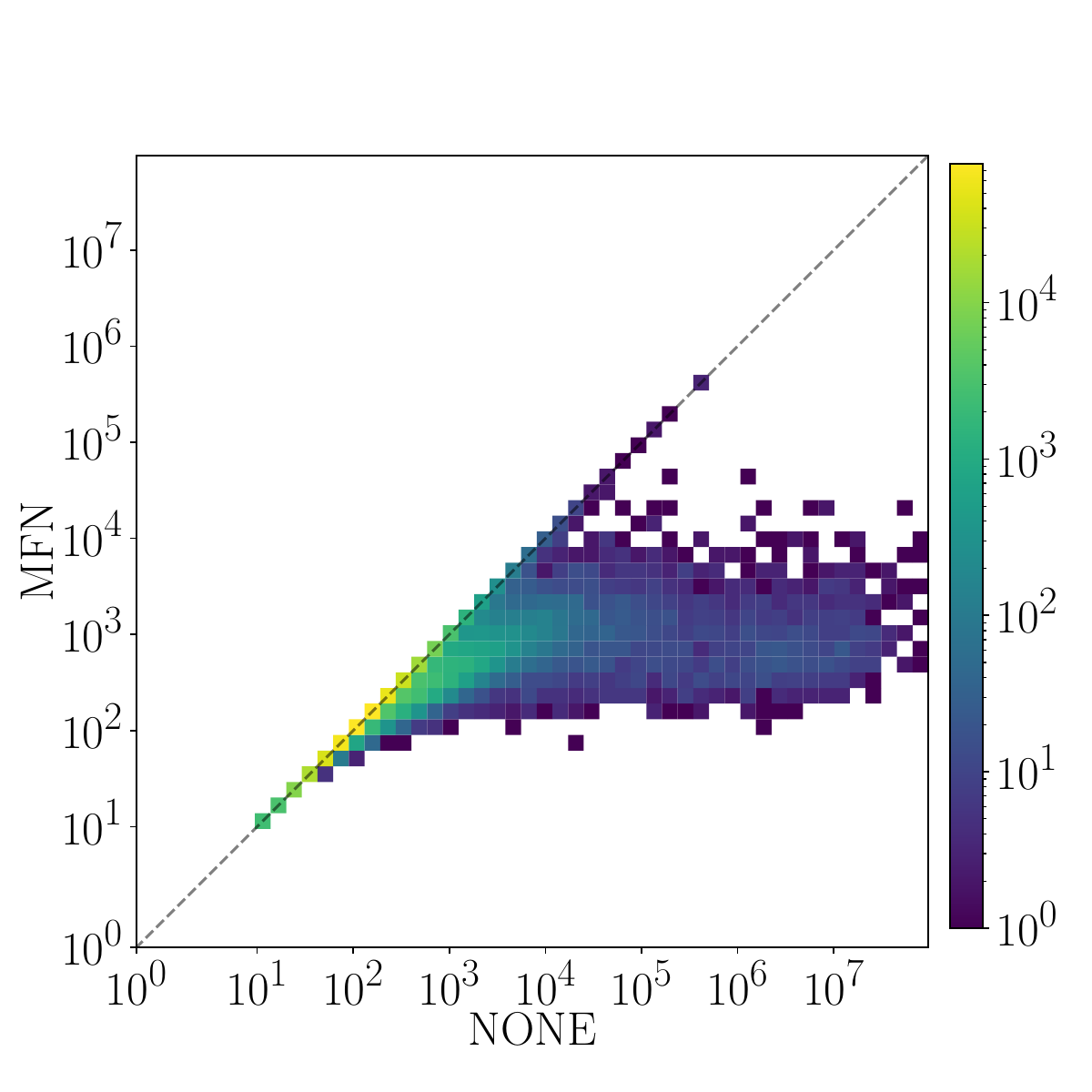}
        \caption{Thompson\\\hspace*{0.55cm} \None{} vs \MFN{}}
        \label{fig:thompson-none-mfn-size}
    \end{subfigure}
    \hfill
    \begin{subfigure}[b]{0.49\textwidth}
        \centering
        \includegraphics[width=\textwidth]{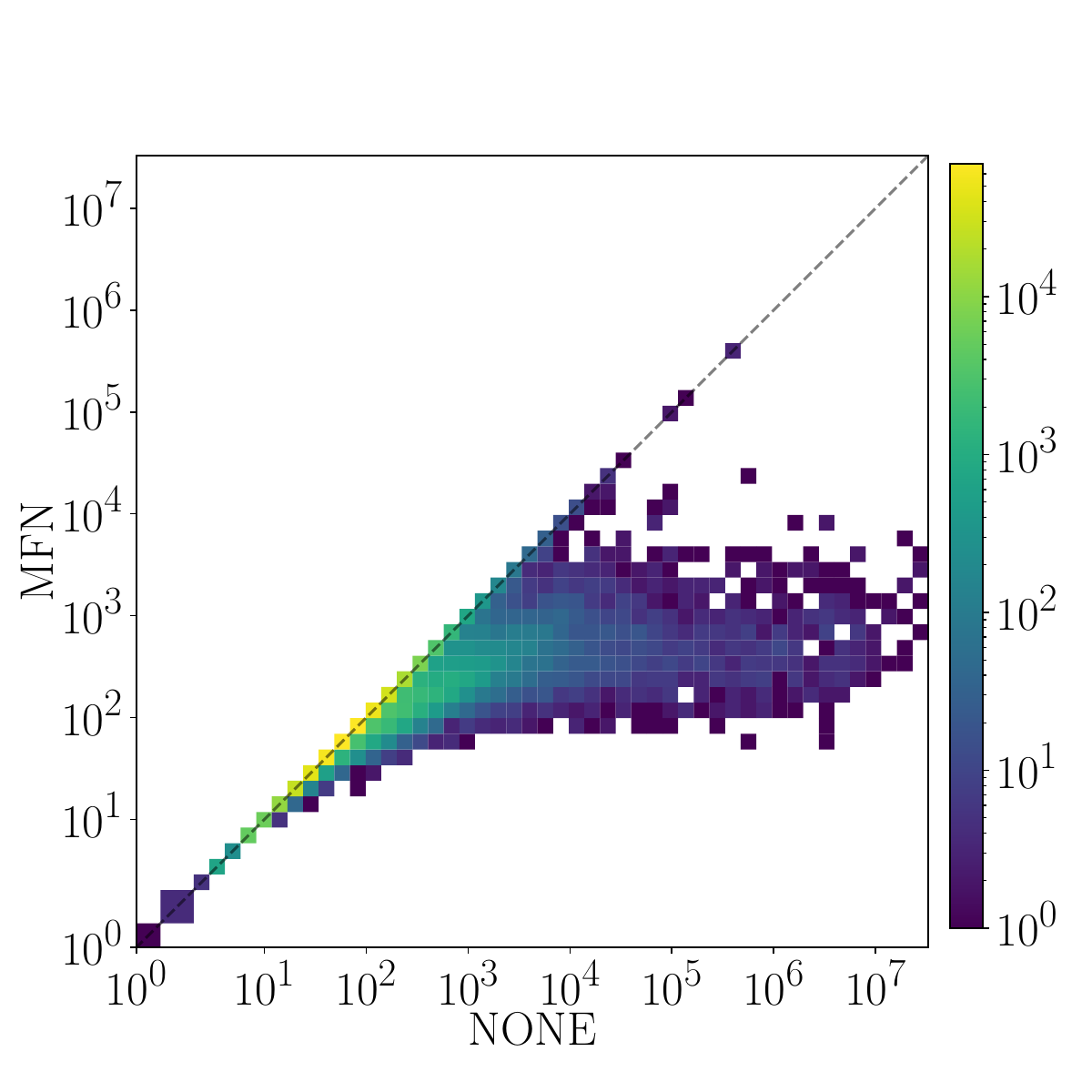}
        \caption{Glushkov\\\hspace*{0.55cm} \None{} vs \MFN{}}
        \label{fig:glushkov-none-mfn-size}
    \end{subfigure}
    \hfill\null
    \caption{Comparison of \None{} to \MFN{} in terms of matching steps for Thompson and Glushkov constructions.}
    \label{fig:add-state}
\end{figure*}

\subsection*{MFN and CN Comparison Examples}

The regex that resulted in one of the worst relative matching time performances of \MFN{} memoization compared to \CN{} memoization for both the Thompson and Glushkov constructions was the following. Observe that this regular expression uses the common practical syntax; refer to the documentation of just about any Perl-style regex matcher (all mentioned in this article are) for details. Consider

\begin{center}
\verb|\w+([-.]\w+)*@\w+([-]\w+)*\.(\w+([-]\w+)*\.)*[a-z]{2,3}$|
\end{center} \noexpand
This regex is the second worst example for Thompson and the fourth worst for Glushkov. In this example, \MFN{} memoization used an average of 5708.1 matching steps, while \CN{} memoization used an average of 5228.1. However, \MFN{} memoization also used an average of 693.2 memoization entries, while \CN{} memoization used an average of 1051.6. For Glushkov, \MFN{} memoization used an average of 2120.6 matching steps, while \CN{} memoization used an average of 2082.6. However, as in the Thompson case, \MFN{} memoization used an average of 693.2 memoization entries, contrasted with 1047.9 for \CN{} memoization. This shows that even in the worst case, \MFN{} memoization is not far worse than \CN{} memoization in terms of matching time.

\subsubsection*{Thompson}

An example of where \MFN{} memoization reduced the number of memoization entries at no cost to matching time for the Thompson construction is given by the following regex.
\begin{center}
\verb|((^[ \t]*>[ \t]?.+\n(.+\n)*\n*)+)|
\end{center} \noexpand
In this example, \MFN{} memoization would memoize four states, while \CN{} memoization would memoize six. This regex has nested closure operators, which allow \MFN{} to take advantage of the Thompson domination property. For this example, \MFN{} memoization used an average of 46.9 and 44.9 memoization entries for full and partial matching, respectively, while \CN{} memoization used an average of 339.9 and 282 memoization entries for full and partial matching, respectively.

\subsubsection*{Glushkov}

An example of where \MFN{} memoization reduced the number of memoization entries at no cost to matching time for the Glushkov construction is the following regex, which is displayed over two lines.

\begin{lstlisting}
^([A-Za-z0-9]+(([-.\_:@+]|--)[A-Za-z0-9]+)*(/([A-Za-z0-9]+(([-.\_:@+]|--)[A-Za-z0-9]+)*))*)$
\end{lstlisting}

In this example, \MFN{} memoization would memoize four states, while \CN{} memoization would memoize nine. This example has nested closure operators, where several apply to subexpressions that contain closure operators that are not nullable. \MFN{} will only memoize states in the deeper subexpressions. For this example, \MFN{} memoization used an average of 132.6 memoization entries for both full and partial matching, while \CN{} memoization used an average of 447.6 memoization entries for both full and partial matching.

\begin{table}[htbp]
    \centering
    \begin{minipage}{0.45\textwidth}
        \centering
        \caption{The mean, median, and standard deviation of string input sizes.}
        \begin{tabular}{|l|c|c|c|}
            \hline
            \textbf{\textit{Type}} & \textbf{\textit{Mean}} & \textbf{\textit{Median}} & \textbf{\textit{Stdev}} \\ 
            \hline
            Positive & 18.94 & 12.0 & 92.49 \\ 
            Negative & 18.99 & 12.0 & 91.76 \\ 
            \hline
            \multicolumn{4}{l}{$^{\mathrm{*}}$All values are rounded to two decimal places.}
            \label{tab:strings}
        \end{tabular}
        \vskip.5cm plus 1cm
        \caption{The mean, median, and standard deviation of the number of memoization entries.}
        \begin{tabular}{|l|c|c|c|}
            \hline
            \multicolumn{4}{|c|}{\textbf{Positive Strings}} \\ 
            \hline
            \textbf{\textit{Type}} & \textbf{\textit{Mean}} & \textbf{\textit{Median}} & \textbf{\textit{Stdev}} \\ 
            \hline
            \multicolumn{4}{|c|}{\textbf{Full -- Glushkov}} \\ 
            \hline
            \IN{} & 10.47 & 2.6 & 33.93 \\ 
            \CN{} & 8.35 & 0.0 & 22.49 \\ 
            \MFN{} & 8.30 & 0.0 & 22.24 \\
            \hline
            \multicolumn{4}{|c|}{\textbf{Full -- Thompson}} \\ 
            \hline
            \IN{} & 11.22 & 4.6 & 35.80 \\ 
            \CN{} & 8.30 & 0.0 & 22.63 \\ 
            \MFN{} & 8.23 & 0.0 & 21.79 \\ 
            \hline
            \multicolumn{4}{|c|}{\textbf{Partial -- Glushkov}} \\ 
            \hline
            \IN{} & 10.60 & 2.7 & 34.01 \\ 
            \CN{} & 8.43 & 0.0 & 22.48 \\ 
            \MFN{} & 8.37 & 0.0 & 22.23 \\ 
            \hline
            \multicolumn{4}{|c|}{\textbf{Partial -- Thompson}} \\ 
            \hline
            \IN{} & 11.35 & 4.7 & 35.91 \\ 
            \CN{} & 8.38 & 0.0 & 22.64 \\ 
            \MFN{} & 8.31 & 0.0 & 21.77 \\ 
            \hline
            \multicolumn{4}{|c|}{\textbf{Negative Strings}} \\ 
            \hline
            \textbf{\textit{Type}} & \textbf{\textit{Mean}} & \textbf{\textit{Median}} & \textbf{\textit{Stdev}} \\ 
            \hline
            \multicolumn{4}{|c|}{\textbf{Full -- Glushkov}} \\ 
            \hline
            \IN{} & 8.31 & 0.0 & 32.58 \\ 
            \CN{} & 5.49 & 0.0 & 19.76 \\ 
            \MFN{} & 5.38 & 0.0 & 19.16 \\ 
            \hline
            \multicolumn{4}{|c|}{\textbf{Full -- Thompson}} \\ 
            \hline
            \IN{} & 9.16 & 0.6 & 36.18 \\ 
            \CN{} & 5.41 & 0.0 & 19.52 \\ 
            \MFN{} & 5.29 & 0.0 & 18.80 \\ 
            \hline
            \multicolumn{4}{|c|}{\textbf{Partial -- Glushkov}} \\ 
            \hline
            \IN{} & 9.27 & 0.0 & 36.01 \\ 
            \CN{} & 6.25 & 0.0 & 22.50 \\ 
            \MFN{} & 6.15 & 0.0 & 21.94 \\ 
            \hline
            \multicolumn{4}{|c|}{\textbf{Partial -- Thompson}} \\ 
            \hline
            \IN{} & 10.04 & 0.6 & 41.13 \\ 
            \CN{} & 6.19 & 0.0 & 22.40 \\ 
            \MFN{} & 6.07 & 0.0 & 21.49 \\ 
            \hline
            \multicolumn{4}{l}{$^{\mathrm{*}}$All values are rounded to two decimal places.}
            \label{tab:size}
        \end{tabular}
    \end{minipage}
    \hfill %
    \begin{minipage}{0.45\textwidth}
        \centering
        \caption{The mean, median, and standard deviation of the number of matching steps.}
        \begin{tabular}{|l|c|c|c|}
            \hline
            \multicolumn{4}{|c|}{\textbf{Positive Strings}} \\ 
            \hline
            \textbf{\textit{Type}} & \textbf{\textit{Mean}} & \textbf{\textit{Median}} & \textbf{\textit{Stdev}} \\ 
            \hline
            \multicolumn{4}{|c|}{\textbf{Full -- Glushkov}} \\ 
            \hline
            \IN{} & 24.77 & 14.2 & 102.06 \\ 
            \CN{} & 24.78 & 14.2 & 102.08 \\ 
            \MFN{} & 24.78 & 14.2 & 102.08 \\ 
            \None{} & 77.05 & 14.2 & 17914.10 \\ 
            \hline
            \multicolumn{4}{|c|}{\textbf{Full -- Thompson}} \\ 
            \hline
            \IN{} & 44.89 & 26.0 & 130.11 \\ 
            \CN{} & 44.95 & 26.0 & 132.79 \\ 
            \MFN{} & 44.95 & 26.0 & 132.79 \\ 
            \None{} & 395.16 & 26.0 & 80200.51 \\ 
            \hline
            \multicolumn{4}{|c|}{\textbf{Partial -- Glushkov}} \\ 
            \hline
            \IN{} & 25.74 & 15.2 & 102.41 \\ 
            \CN{} & 25.75 & 15.2 & 102.43 \\ 
            \MFN{} & 25.75 & 15.2 & 102.43 \\ 
            \None{} & 78.16 & 15.2 & 17914.03 \\ 
            \hline
            \multicolumn{4}{|c|}{\textbf{Partial -- Thompson}} \\ 
            \hline
            \IN{} & 43.69 & 25.0 & 131.28 \\ 
            \CN{} & 43.76 & 25.0 & 133.98 \\ 
            \MFN{} & 43.76 & 25.0 & 133.99 \\ 
            \None{} & 358.79 & 25.0 & 76662.17 \\ 
            \hline
            \multicolumn{4}{|c|}{\textbf{Negative Strings}} \\ 
            \hline
            \textbf{\textit{Type}} & \textbf{\textit{Mean}} & \textbf{\textit{Median}} & \textbf{\textit{Stdev}} \\ 
            \hline
            \multicolumn{4}{|c|}{\textbf{Full -- Glushkov}} \\ 
            \hline
            \IN{} & 17.96 & 8.9 & 63.07 \\ 
            \CN{} & 17.99 & 8.9 & 63.47 \\ 
            \MFN{} & 18.00 & 8.9 & 63.47 \\ 
            \None{} & 1316.44 & 9.0 & 79361.50 \\ 
            \hline
            \multicolumn{4}{|c|}{\textbf{Full -- Thompson}} \\ 
            \hline
            \IN{} & 54.54 & 29.8 & 171.27 \\ 
            \CN{} & 55.21 & 29.8 & 175.91 \\ 
            \MFN{} & 55.23 & 29.8 & 176.09 \\ 
            \None{} & 7872.27 & 29.9 & 290233.72 \\ 
            \hline
            \multicolumn{4}{|c|}{\textbf{Partial -- Glushkov}} \\ 
            \hline
            \IN{} & 36.59 & 17.1 & 1074.10 \\ 
            \CN{} & 36.78 & 17.1 & 1074.32 \\ 
            \MFN{} & 36.78 & 17.1 & 1074.32 \\ 
            \None{} & 981.30 & 17.4 & 74635.75 \\ 
            \hline
            \multicolumn{4}{|c|}{\textbf{Partial -- Thompson}} \\ 
            \hline
            \IN{} & 62.95 & 29.9 & 1083.54 \\ 
            \CN{} & 64.84 & 30.0 & 1109.72 \\ 
            \MFN{} & 64.86 & 30.0 & 1109.75 \\  
            \None{} & 5596.49 & 30.2 & 292893.14 \\ 
            \hline
            \multicolumn{4}{l}{$^{\mathrm{*}}$All values are rounded to two decimal places.}
            \label{tab:state}
        \end{tabular}
    \end{minipage}
\end{table}

\section{Conclusion and Future Work}
\label{sec:conclusion}

We introduced the \emph{Minimum Feedback Node} (\MFN{}) memoization
scheme for backtracking regular expression matchers, based on computing
a minimum feedback vertex set for an NFA. We presented linear-time parse
tree algorithms for both Thompson and Glushkov NFA constructions.
We showed through examples that \I{} starting from \CN{} does not always produce a minimum memoization set -- a finding that challenges the conjecture of~\cite{van2021memoized} (at least for the Glushkov construction) -- and that \MFN{} may memoize strictly fewer states than \I{}.
We further showed that Thompson NFAs are always at least as selective
as Glushkov NFAs under any of the studied memoization schemes, providing
a theoretical justification for using Thompson constructions when aiming to reduce the number of states
being memoized. Our experimental evaluation confirms
that \MFN{} reduces memory usage compared to \CN{} and \IN{} with negligible
impact on matching time.
Two natural follow-up directions deserve mention. First, we did not include \I{} in our corpus-scale experiments. The schemes we compared -- \None{}, \IN{}, \CN{}, and \MFN{} -- all admit memoization-set computation in time linear in $|R|$, the size of the regular expression, and our experimental setup was designed around this common cost model. \I{} does not fit this model: as defined in~\cite{van2021memoized}, it iterates over the states of \CN{} in some order and performs an IDA check at each step, with a naive implementation running in $O(|Q|^4)$ time. More fundamentally, the resulting set is \emph{order-dependent}, so a meaningful comparison must either fix a canonical order 
or quantify the behaviour across all orders. 
Second, the corpus-level averages we report are dominated by regexes
without IDA, for which all selective schemes use few entries, and the
absolute gap between them is small. 
A focused experimental study restricted to such regexes, drawing on existing ReDoS benchmarks in addition to the catastrophic subset of the Polyglot corpus, is left as future work. 

Future work includes: (i) proving or disproving Conjecture~\ref{conj:mfn-contains-min} or~\ref{conj:mfn-refine}; (ii) determining the precise complexity of computing a minimum memoization set for Thompson NFAs --- while the NP-hardness result of~\cite{van2021memoized} applies to general NFAs, the restricted structure of Thompson NFAs suggests that polynomial-time algorithms may still be possible; (iii) extending \MFN{} to prioritized NFAs and studying its interaction with extended regex features such as counters, backreferences, and lookaheads; and (iv) developing memoization strategies for regexes with counters
(at present, counters are handled by expansion, but this can obscure the trade-off between memoization and matching time: for example, with \CN{}, the regex $(a\Or a)\{1,100\}$ yields no memoization yet performs poorly on inputs of the form $aa\ldots ab$, whereas \IN{} handles this example well but at the cost of substantial memoization); and (v) a dedicated empirical comparison of MFN with \I{}, including an analysis of the sensitivity of IAR to the ordering of states in \CN{}, conducted on the catastrophic subset of the Polyglot corpus.

\bibliographystyle{splncs04}
\bibliography{main}

\begin{thebibliography}{10}
\providecommand{\url}[1]{\texttt{#1}}
\providecommand{\urlprefix}{URL }
\providecommand{\doi}[1]{https://doi.org/#1}

\bibitem{inthewild}
Berglund, M., van~der Merwe, B.: On the semantics of regular expression parsing
  in the wild. Theoretical Computer Science  \textbf{679},  69--82 (2017).
  \doi{10.1016/j.tcs.2016.09.006}

\bibitem{glushkov-tcs}
Berry, G., Sethi, R.: From regular expressions to deterministic automata.
  Theor. Comput. Sci.  \textbf{48}(3),  117--126 (1986).
  \doi{10.1016/0304-3975(86)90088-5}

\bibitem{davis2019aren}
Davis, J.C., IV, L.G.M., Coghlan, C.A., Servant, F., Lee, D.: Why aren't
  regular expressions a lingua franca? {An} empirical study on the re-use and
  portability of regular expressions. In: Dumas, M., Pfahl, D., Apel, S.,
  Russo, A. (eds.) Proceedings of the {ACM} Joint Meeting on European Software
  Engineering Conference and Symposium on the Foundations of Software
  Engineering, {ESEC/SIGSOFT} {FSE} 2019, Tallinn, Estonia, August 26-30, 2019.
  pp. 443--454. {ACM} (2019). \doi{10.1145/3338906.3338909}

\bibitem{davis2021using}
Davis, J.C., Servant, F., Lee, D.: {Using} {Selective} {Memoization} to
  {Defeat} {Regular} {Expression} {Denial} of {Service} {(ReDoS)}. In: 42nd
  {IEEE} Symposium on Security and Privacy, {SP} 2021, San Francisco, CA, USA,
  24-27 May 2021. pp. 1--17. {IEEE} (2021). \doi{10.1109/SP40001.2021.00032}

\bibitem{GiammarresiPontyWood1998}
Giammarresi, D., Ponty, J.L., Wood, D.: {Glushkov} and {Thompson}
  {Constructions}: {A} {Synthesis}. Technical Report 98-17, Universit{\`a} Ca'
  Foscari di Venezia (1998)

\bibitem{giammarresi2004thompson}
Giammarresi, D., Ponty, J., Wood, D., Ziadi, D.: A characterization of
  {Thompson} digraphs. Discret. Appl. Math.  \textbf{134}(1-3),  317--337
  (2004). \doi{10.1016/S0166-218X(03)00299-3}

\bibitem{glushkov1961abstract}
Glushkov, V.M.: The abstract theory of automata. Russian Mathematical Surveys
  \textbf{16}(5), ~1 (oct 1961). \doi{10.1070/RM1961v016n05ABEH004112}

\bibitem{karp1972}
Karp, R.M.: Reducibility among combinatorial problems. In: Miller, R.E.,
  Thatcher, J.W. (eds.) Proceedings of a symposium on the Complexity of
  Computer Computations, held March 20-22, 1972, at the {IBM} Thomas J. Watson
  Research Center, Yorktown Heights, New York, {USA}. pp. 85--103. The {IBM}
  Research Symposia Series, Plenum Press, New York (1972).
  \doi{10.1007/978-1-4684-2001-2\_9}

\bibitem{van2021memoized}
van~der Merwe, B., Mouton, J., van Litsenborgh, S., Berglund, M.: Memoized
  regular expressions. In: Maneth, S. (ed.) Implementation and Application of
  Automata - 25th International Conference, {CIAA} 2021, Virtual Event, July
  19-22, 2021, Proceedings. Lecture Notes in Computer Science, vol. 12803, pp.
  39--52. Springer (2021). \doi{10.1007/978-3-030-79121-6\_4}

\bibitem{roodt2024benchmarking}
Roodt, A., Watling, B.K.M., Bester, W., van~der Merwe, B., Sung, S., Han, Y.S.:
  Benchmarking regular expression matching. In: Fazekas, S.Z. (ed.)
  Implementation and Application of Automata - 28th International Conference,
  {CIAA} 2024, Akita, Japan, September 3-6, 2024, Proceedings. Lecture Notes in
  Computer Science, vol. 15015, pp. 316--331. Springer (2024).
  \doi{10.1007/978-3-031-71112-1\_23}

\bibitem{shamir1979linear}
Shamir, A.: A linear time algorithm for finding minimum cutsets in reducible
  graphs. {SIAM} J. Comput.  \textbf{8}(4),  645--655 (1979).
  \doi{10.1137/0208051}

\bibitem{thompson1968}
Thompson, K.: Regular expression search algorithm. Commun. {ACM}
  \textbf{11}(6),  419--422 (1968). \doi{10.1145/363347.363387}

\bibitem{weber1991degree}
Weber, A., Seidl, H.: On the degree of ambiguity of finite automata.
  {Theoretical} {Computer} {Science}  \textbf{88}(2),  325--349 (1991).
  \doi{10.1016/0304-3975(91)90381-B}

\bibitem{weideman2016analyzing}
Weideman, N., van~der Merwe, B., Berglund, M., Watson, B.W.: Analyzing matching
  time behavior of backtracking regular expression matchers by using ambiguity
  of {NFA}. In: Han, Y., Salomaa, K. (eds.) Implementation and Application of
  Automata - 21st International Conference, {CIAA} 2016, Seoul, South Korea,
  July 19-22, 2016, Proceedings. Lecture Notes in Computer Science, vol.~9705,
  pp. 322--334. Springer (2016). \doi{10.1007/978-3-319-40946-7\_27}

\end{thebibliography}

\end{document}